\newtheorem{theorem}{Theorem}[section]
\newtheorem{lemma}[theorem]{Lemma}
\newtheorem{proposition}[theorem]{Proposition}
\newtheorem{corollary}[theorem]{Corollary}
\newtheorem{example}[theorem]{Example}
\newtheorem{definition}[theorem]{Definition}
\title{Separability and harmony in ecumenical systems\thanks {Pereira, Pimentel and Sales have been partially supported by CAPES and CNPq.}}
\author[1]{Sonia Marin}
\author[2]{Luiz Carlos Pereira}
\author[3]{Elaine Pimentel}
\author[4]{Emerson Sales}
\affil[1]{Department of Computer Science, Birmingham University, UK\\
s.marin@ucl.ac.uk}
\affil[2]{Department of Philosophy, PUC-Rio \& UERJ\\
luiz@inf.puc-rio.br}
\affil[3]{Department Computer Science, UCL, UK\\
elaine.pimentel@gmail.com}
\affil[4]{Gran Sasso Science Institute, Italy\\
emerson.sales@gssi.it}
\date{}
\begin{document}
\maketitle

\begin{abstract}
The quest of smoothly combining logics so that connectives from classical and intuitionistic logics can co-exist in peace has been a fascinating topic of research for decades now. In 2015, Dag Prawitz proposed a  natural deduction system for an ecumenical first-order logic. We start this work by proposing a {\em pure} sequent calculus version for it, in the sense that connectives are introduced without the use of other connectives. For doing this, we extend sequents with an extra context, the stoup, and define the ecumenical notion of polarities. Finally, we smoothly extend these ideas for handling modalities, presenting pure labeled and nested systems for ecumenical modal logics. 


\bigskip
\noindent
\textbf{Keywords} \quad Ecumenical systems; modalities; nested systems; labeled systems; cut-elimination; polarities.

\end{abstract}

\section{Introduction}\label{sec:intro}

$\LC$~\cite{DBLP:journals/mscs/Girard91} is a sequent system for classical logic that separates the rules for {\em positive} and {\em negative} formulas, being a precursor of the notion of {\em focusing} in sequent systems~\cite{andreoli01apal}.
The idea is that right rules for positive formulas are applied in the {\em stoup}, which is a differentiated context, where formulas are focused on. Negative formulas, on the other hand, are stored in a {\em classical context}, where they can be eagerly decomposed.

Sequents with one stoup have the form 
$\Seq{\Gamma}{\Pi}{\Delta}$,\footnote{It should be mentioned that, in $\LC$, sequents have the one-sided presentation -- the left context is not present. Also, in systems like Girard's $\LU$~\cite{DBLP:journals/apal/Girard93}, sequents have two stoups and linear contexts. Here we adopted the simpler possible version for sequents with stoup supporting the intuitionistic setting and avoiding structural rules in classical contexts.} 
where $\Gamma,\Delta$ are sets and $\Pi$, the stoup, is a multiset containing at most one formula. In $\LC$, the meaning of these contexts is the following.
\begin{itemize}
\item[-] $\Gamma$ is the usual classical left context in well known sequent systems for classical and intuicionistic first order logics, like $\LK$ and $\LJ$~\cite{troelstra96bpt}.
\item[-] The stoup $\Pi$ is a differentiated context, where  positive formulas are ``worked on''. 
In a bottom-up read, a positive formula can be {\em chosen} from the classical right context to populate the stoup using the dereliction rule
$$
\infer[\D]{\Seq{\Gamma}{ \cdot}{\Delta,P}}{\Seq{\Gamma}{P}{\Delta,P}} 
$$
\item[-] The (classical) right context $\Delta$ carries the information of subformulas
of negative formulas $N$, in the sense that
$N=\neg N'$.
This means that $N\in\Delta$ can be interpreted as $N'\in\Gamma$. Negative formulas are added to the classical context via the store rule
$$
\infer[\store]{\Seq{\Gamma}{N}{\Delta}}{\Seq{\Gamma}{\cdot}{\Delta,N}} 
$$
\end{itemize}

It is interesting to note that, while the sequent $\Seq{\Gamma}{\Pi}{\Delta}$ is intuitionistically interpreted as 
$
\Gamma,\neg\Delta\seq\Pi
$, 
it only has a classical interpretation in $\LC$ if the stoup $\Pi$ is {\em empty}.
Moreover, the stoup in $\LC$ is {\em persistent}, in the sense that, after applying dereliction $\D$ over a positive formula $P$ in the bottom-up reading, the stoup is emptied  only when either $P$  is  totally consumed, or a negative subformula is reached -- in which case it is stored in the classical right context. 

The ecumenical systems we will study in this paper have a quite different behavior, since they are {\em intuitionistic} in nature. Hence the use of stoups will mix some of the characteristics of $\LC$ with intuitionistic systems featuring stoup such as, for example, Herbelin's $\LJT$ and $\LJQ$~\cite{DBLP:conf/csl/Herbelin94,DBLP:journals/logcom/DyckhoffL07}. The base difference is that, in the ecumenical formulation, stoups cannot be persistent since, otherwise, the logic would not be complete. 

Several approaches have been proposed for combining intuitionistic and classical logics (see \eg~\cite{DBLP:conf/frocos/CerroH96,DBLP:journals/apal/LiangM11,DBLP:journals/Dowek16a}), many of them inspired by  Girard's  polarised system $\LU$ (\cite{DBLP:journals/apal/Girard93}).
More recently, 
Prawitz chose a  completely different approach by proposing a natural deduction ecumenical system~\cite{DBLP:journals/Prawitz15}. While it also took into account meaning-theoretical
considerations, it is more focused on investigating the philosophical significance of the fact that classical logic can be translated into intuitionistic logic.

In this paper,  we will proceed with a careful study of Prawitz' ecumenical system under the view of Girard's original idea of stoup, for separating the intuitionistic from the classical behaviors. This will also allow for a 
a first-order ecumenical system that avoids the use of negations in the formulation of rules. Such systems are called {\em pure} or {\em separable}~\cite{Murzi2018}, in the sense that connectives are introduced without the use of other connectives, hence giving a clearer notion of the {\em meaning} for that connective. This goes straight into the direction first pointed by Prawitz, and adopted by the Proof-theoretic semantics' school~\cite{DBLP:journals/synthese/KahleS06}. Finally, we will extend this notion to modalities.

In this work, we bring new basis for ecumenical systems, where systems and results presented in~\cite{DBLP:conf/dali/MarinPPS20,DBLP:conf/wollic/MarinPPS21} fit smoothly. More specifically, this work improves the {\em op.cit.} in the following ways:
\begin{enumerate}
\item Instead of building the modal system over the sequent presentation~\cite{DBLP:journals/synthese/PimentelPP21} of Prawitz' ecumenical system~\cite{DBLP:journals/Prawitz15}, we propose a new pure first-order ecumenical system. This not only allows for a better proof theoretic view of Prawitz' original proposal, but it also serves as a solid ground for smoothly accommodating modalities.
\item A new pure labeled system for modalities comes naturally in this approach, and the nested system in~\cite{DBLP:conf/wollic/MarinPPS21} is easily proven correct and complete w.r.t. it. 
\item The proof of completeness of the nested system is new, and it does not refer to the axiomatic system.
\end{enumerate}
Under this new perspective, we can start new lively discussions about the nature of formulas and systems.

The rest of the paper is organized as follows: Section~\ref{sec:LCE} introduces the notion of ecumenical systems with stoup (system $\LCE$), and in in Section~\ref{sec:corr} we prove it complete and correct w.r.t. Prawitz' ecumenical system ($\LE$). This involves a non-trivial use of polarities, as well as a non-standard proof of cut-elimination. We show that, is one is not careful, the quest for {\em purity} ends up in {\em collapsing}; Section~\ref{sec:modal}  extends the propositional fragment of $\LCE$ with modalities, resulting in a new pure labeled ecumenical modal system ($\labEK$);  Section~\ref{sec:nested} brings the nested ecumenical system $\nEK$ from~\cite{DBLP:conf/wollic/MarinPPS21}, which is naturally seen as the label-free counterpart of $\labEK$; Section~\ref{sec:frag} briefly discusses fragments, axioms and extensions and Section~\ref{sec:conc} discusses related and future work, and concludes the paper.

\section{The system $\LCE$}\label{sec:LCE}

In~\cite{DBLP:journals/Prawitz15} Dag Prawitz proposed a natural deduction system where classical and intuitionistic logic could coexist in peace. In this system, 
the classical logician and the intuitionistic logician would share the universal quantifier, conjunction, negation and the constant for the absurd, but they would each have their own existential quantifier, disjunction and implication, with different meanings. Prawitz' main idea is that these different meanings are given by a semantical framework that can be accepted by both parties.  

The sequent system $\LE$ (depicted in Fig.~\ref{Fig:LE}) was presented in~\cite{DBLP:journals/synthese/PimentelPP21} as the sequent counterpart of Prawitz' natural deduction system. 

The language $\Lscr$ used for ecumenical systems is described as follows. We will use a subscript $c$ for the classical meaning and $i$ for the intuitionistic, dropping such subscripts when formulas/connectives can have either meaning. 

Classical and intuitionistic  n-ary predicate symbols ($p_{c}, p_{i},\ldots$) co-exist in $\Lscr$ but have different meanings. 
The neutral logical connectives $\{\bot,\neg,\wedge,\forall\}$ are common for classical and intuitionistic fragments, while $\{\iimp,\ivee,\iexists\}$ and $\{\cimp,\cvee,\cexists\}$ are restricted to intuitionistic and classical interpretations, respectively.

$\LE$ has very interesting proof theoretical properties including cut-elimination, together with a Kripke semantical interpretation, that allowed the proposal of a variety of ecumenical proof systems, such as multi-conclusion and nested sequent systems, as well as several  fragments of such systems~\cite{DBLP:journals/synthese/PimentelPP21}.

\begin{figure}[t]
{\sc Intuitionistic and neutral Rules}
$$
\infer[{\wedge L}]{A \wedge B,\Gamma \seq 
C}{
A, B,\Gamma \seq C} 
\quad
\infer[{\wedge R}]{\Gamma \seq
A \wedge B}{\Gamma
\seq A \quad \Gamma \seq  B} 
\quad
\infer[{\vee_i L}]{A \vee_i B,\Gamma \seq
C}{A,\Gamma
\seq C \quad B,\Gamma \seq C} 
$$
$$
\infer[{\vee_i R_j}]{\Gamma \seq  A_1\vee_i A_2}{\Gamma
\seq A_j} 
\quad
\infer[{\iimp L}]{\Gamma, A\iimp B \seq
C}
{A\iimp B,\Gamma \seq A \quad B, \Gamma  \seq 
C} 
\quad
\infer[{\iimp R}]{\Gamma \seq 
A\iimp B}{\Gamma, A \seq  B}
$$
$$
\infer[{\neg L}]{\neg A,\Gamma \seq \bot
}{ \neg A,\Gamma \seq  A}
\quad
\infer[{\neg R}]{\Gamma \seq 
\neg A}{\Gamma, A \seq  \bot}
\quad
\infer[{\bot L}]{\bot,\Gamma \seq A}{}
$$
$$
\infer[\forall L]{\forall x.A,\Gamma \seq C }{A[y/x],\forall x.A,\Gamma \seq C}
\quad
\infer[\forall R]{\Gamma \seq \forall x.A }{\Gamma \seq A[y/x]}
\quad
\infer[\exists_i L]{ \exists_ix.A,\Gamma \seq C}{A[y/x],\Gamma \seq C}
\quad
\infer[\exists_i R]{\Gamma \seq \exists_ix.A }{\Gamma \seq A[y/x]}
$$
{\sc Classical rules}

\vspace{.3cm}
\resizebox{\textwidth}{!}{$
\infer[\vee_cL]{ A \vee_c B,\Gamma \seq\bot }{A,\Gamma \seq \bot & B, \Gamma
\seq \bot} 
\quad
\infer[\vee_cR]{\Gamma
\seq \ A \vee_c B}{\Gamma, \neg A , \neg B \seq \bot}
\quad
\infer[\cimp L]{ A \rightarrow_c B,\Gamma \seq\bot }{ A \rightarrow_c B,\Gamma \seq A & B,\Gamma \ \seq
\bot}
$}
$$
\infer[\cimp R]{\Gamma \seq A
\rightarrow_c B}{\Gamma, A , \neg B \seq  \bot} 
\quad
\infer[L_c]{p_c,\Gamma\seq\bot}{p_i,\Gamma\seq \bot } 
\quad
\infer[R_c]{\Gamma \seq p_c}{\Gamma,\neg p_i \seq\bot} 
$$
$$
\infer[\exists_c L]{\exists_cx.A,\Gamma \seq \bot}{A[y/x],\Gamma \seq  \bot}
\quad
\infer[\exists_c R]{\Gamma \seq \exists_cx.A}{\Gamma, \forall x.\neg A \seq \bot}
$$
{\sc Initial, cut and Structural Rules}
$$
\infer[\init]{p_i,\Gamma \seq p_i}{} 
\qquad 
\infer[\cut]{\Gamma \seq C}{\Gamma \seq A & A,\Gamma \seq C} 
\qquad 
\infer[{\W}]{\Gamma \seq A}{\Gamma \seq \bot}
$$
\caption{Ecumenical sequent system  $\LE$. In rules 
$\forall R,\iexists L,\cexists L$, the eigenvariable $y$ is fresh; $p$ is atomic.}\label{Fig:LE}
\end{figure}

\subsection{Ecumenical consequence and stoup}\label{sec:interp}

Denoting by $\vdash_{\mathsf{S}} A$ the fact that the formula $A$ is a theorem in the proof system $\mathsf{S}$, the following  theorems are easily provable in $\LE$:
\begin{enumerate}
\item\label{and} $\vdash_{\LE}  (A \vee_c B) \leftrightarrow_i  \neg (\neg A \wedge \neg B)$;
\item\label{or} $\vdash_{\LE} (A \to_c B) \leftrightarrow_i \neg (A \wedge \neg B)$; 
\item\label{foe1} $\vdash_{\LE} (\exists_cx.A) \leftrightarrow_i \neg (\forall x.\neg A)$.
\end{enumerate}
There equivalences are of interest since they relate the classical
and the neutral operators: the classical connectives can be defined using negation, conjunction, and the universal quantifier. 
On the other hand,
\begin{enumerate}
\setcounter{enumi}{3}
\item $\vdash_{\LE} (\neg\neg A) \to_c A $ but  $\not\vdash_{\LE} (\neg\neg A) \to_i A $ in general;
\item\label{mp} $\vdash_{\LE} (A\wedge(A  \to_i B)) \to_i B$ but $\not\vdash_{\LE} (A\wedge(A  \to_c B)) \to_i B $ in general;
\item\label{foe2} $\vdash_{\LE} \forall x.A \iimp  \neg \exists_cx.\neg A $ but $\not\vdash_{\LE} \neg\exists_cx.\neg A \to_i \forall x.A$ in general.
\end{enumerate}
\vspace{-.2cm}
Observe that~(\ref{foe1}) and~(\ref{foe2}) reveal the asymmetry between definability of quantifiers: while the classical existential can be defined from the universal quantification, the other way around is not true, in general. This is closely related with the fact that, proving $\forall x.A$ from $\neg\exists_cx.\neg A$ depends on $A$ being a classical formula. We will come back to this in Section~\ref{sec:modal}.

On its turn,  the following result states that logical consequence in $\LE$ is intrinsically intuitionistic. 
\begin{proposition}[\cite{DBLP:journals/synthese/PimentelPP21}]\label{prop:ev}
$\Gamma\vdash B$ is provable in $\LE$ iff $\vdash_{\LE}\bigwedge\Gamma\iimp B$.
\end{proposition}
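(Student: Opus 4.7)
The plan is to prove both directions of the biconditional by direct construction on proofs in $\LE$, exploiting the deduction-theorem-style behaviour of the intuitionistic implication $\iimp$.

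For the forward direction, suppose a proof $\pi$ of $\Gamma \seq B$ is given in $\LE$, with $\Gamma = \{A_1, \ldots, A_n\}$. I would apply $\wedge L$ repeatedly at the root of $\pi$ to pack the left context into a single conjunction: from $A_1, A_2, \ldots, A_n \seq B$ one infers $A_1 \wedge A_2, A_3, \ldots, A_n \seq B$, and so on, eventually reaching $\bigwedge \Gamma \seq B$ (the choice of association is immaterial since contexts are sets). A final application of $\iimp R$ then yields $\seq \bigwedge \Gamma \iimp B$.

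For the converse, assume $\vdash_{\LE} \bigwedge \Gamma \iimp B$. Two preliminary facts are needed: (i) \emph{generalised identity}, namely that $A \seq A$ is provable in $\LE$ for every formula $A$, established by straightforward induction on $A$ using the rules for each connective; and (ii) left-weakening, which is admissible essentially for free since contexts are taken as sets. Using these, I would first derive $\Gamma \seq A_i$ for each $A_i \in \Gamma$ from (i) and (ii), and combine them via repeated $\wedge R$ to obtain $\Gamma \seq \bigwedge \Gamma$. Then I would apply $\iimp L$ with premises $\bigwedge \Gamma \iimp B, \Gamma \seq \bigwedge \Gamma$ (weakening of the previous derivation) and $B, \Gamma \seq B$ (weakened identity) to obtain $\bigwedge \Gamma \iimp B, \Gamma \seq B$. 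A single application of $\cut$ with the weakened assumption $\Gamma \seq \bigwedge \Gamma \iimp B$ then yields $\Gamma \seq B$.

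The point to flag, rather than a technical obstacle, is that the argument depends crucially on $\iimp$ being the appropriate implication for the deduction theorem in $\LE$. The analogue with $\cimp$ would fail because $\cimp R$ does not have the shape $\Gamma, A \seq B$ in its premise, and correspondingly modus ponens for $\cimp$ fails, as recorded in item~(\ref{mp}). One could in principle eliminate the $\cut$ used in the converse direction via the cut-elimination theorem for $\LE$, but the proposition as stated makes no such demand.
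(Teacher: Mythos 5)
Your argument is correct and is the standard deduction-theorem argument for $\iimp$; the paper itself imports this proposition from the cited reference without proof, and what you give is exactly the expected derivation (pack $\Gamma$ with $\wedge L$ then $\iimp R$ one way; generalised identity, $\wedge R$, $\iimp L$ and $\cut$ the other way). One small remark: $\cut$ is a primitive rule of $\LE$ as presented in Fig.~\ref{Fig:LE}, so your closing caveat about invoking cut-elimination is unnecessary.
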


To preserve  the ``classical behavior'', \ie, to satisfy all the principles of classical logic \eg\ {\em modus ponens} and the {\em classical reductio}, it is sufficient that the main operator of the formula
will be eventually classical~\cite{DBLP:journals/synthese/PimentelPP21}. Thus,  ``hybrid" formulas, \ie, formulas that contain
 classical and intuitionistic operators may have a classical behavior. Formally,
\begin{definition}\label{def:eec}
{\em Eventually externally classical} (\eec\ for short) formulas  are given by the following grammar
$$
\begin{array}{lcl}
A^{ec} & := & A^{c}  \mid A^{ec}\wedge A^{ec}  \mid \forall x. A^{ec} \mid A\iimp A^{ec} \mid \neg A
\end{array}
$$
where $A$ is any formula and $A^{c}$ is an {\em externally classical} formula given by 
$$
\begin{array}{lcl}
A^{c} & := & p_{c}  \mid \bot \mid A\cvee A   \mid A\cimp A \mid \cexists x. A
\end{array}
$$
\end{definition}
For \eec\ formulas we can prove the following theorems
\begin{enumerate}
\setcounter{enumi}{6}
\item $\vdash_{\LE} (A\wedge(A  \to_c B^{ec}))  \iimp B^{ec} $.
\item $\vdash_{\LE} \neg\neg B^{ec}\iimp B^{ec}$.\label{dn}
\item $\vdash_{\LE} \neg\exists_cx.\neg B^{ec} \to_i \forall x.B^{ec}$.\label{ue}
\end{enumerate}
More generally, notice that all classical right rules as well as the right rules for the neutral connectives in $\LE$ are invertible. Since invertible rules can be applied eagerly when proving a sequent, this entails that \eec\ formulas can be eagerly decomposed.
As a consequence, the ecumenical entailment, when restricted to \eec\ succedents (antecedents having an unrestricted form), is classical.
\begin{theorem}[\cite{DBLP:journals/synthese/PimentelPP21}, extended]\label{thm:classical-right}
Let $A^{ec}$ be an eventually externally classical formula and $\Gamma$ be a multiset of ecumenical formulas. Then
\vspace{-.2cm}
$$\vdash_{\LE}\bigwedge\Gamma\cimp A^{ec} \mbox{ iff } \vdash_{\LE}\bigwedge\Gamma\iimp A^{ec}.$$ 
\end{theorem}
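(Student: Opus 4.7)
The plan is to handle the two directions separately. The $\Leftarrow$ direction does not really use the eec structure of $A^{ec}$: it is a routine transformation between intuitionistic and classical implications that works for any succedent. The nontrivial direction is $\Rightarrow$, and this is precisely where the hypothesis that $A^{ec}$ is eventually externally classical is needed, via the generalised modus ponens for classical implication into an eec formula (item~(7) in the list preceding the theorem).

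For $\Leftarrow$, suppose $\vdash_{\LE}\bigwedge\Gamma\iimp A^{ec}$. By Proposition~\ref{prop:ev} this yields a derivation $\mathcal{D}$ of $\bigwedge\Gamma\seq A^{ec}$. Weaken $\mathcal{D}$ by $\neg A^{ec}$ on the left, and then apply $\neg L$ using an identity instance on $A^{ec}$ to obtain $\bigwedge\Gamma,\neg A^{ec}\seq\bot$. A single $\cimp R$ step concludes $\seq\bigwedge\Gamma\cimp A^{ec}$, hence $\vdash_{\LE}\bigwedge\Gamma\cimp A^{ec}$.

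For $\Rightarrow$, assume $\vdash_{\LE}\bigwedge\Gamma\cimp A^{ec}$. Instantiate item~(7) with $A:=\bigwedge\Gamma$ and $B^{ec}:=A^{ec}$ to get $\vdash_{\LE}(\bigwedge\Gamma\wedge(\bigwedge\Gamma\cimp A^{ec}))\iimp A^{ec}$, which by Proposition~\ref{prop:ev} gives a derivation of $\bigwedge\Gamma\wedge(\bigwedge\Gamma\cimp A^{ec})\seq A^{ec}$. Separately, combining the hypothesis with the identity $\bigwedge\Gamma\seq\bigwedge\Gamma$ via $\wedge R$ gives $\bigwedge\Gamma\seq\bigwedge\Gamma\wedge(\bigwedge\Gamma\cimp A^{ec})$. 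A cut merges the two into $\bigwedge\Gamma\seq A^{ec}$, and $\iimp R$ followed by another application of Proposition~\ref{prop:ev} produces $\vdash_{\LE}\bigwedge\Gamma\iimp A^{ec}$.

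The main obstacle is therefore not in the forward cut-and-paste above, but one level down: justifying item~(7) itself (and, on the way, items~(8) and~(9)) for the whole eec grammar. I would discharge this by structural induction on $A^{ec}$, using invertibility of the classical and neutral right rules to push $A \cimp{}$ through $\wedge$, $\forall$, $\iimp$ and $\neg$, and settling the externally classical base case with double-negation elimination (item~(8)). Since those items are stated in the paper just before the theorem, I would treat them as available lemmas and let them carry the inductive weight, keeping the proof of the theorem itself to the short cut-based argument sketched here.
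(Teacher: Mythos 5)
Your proof is correct, but it is packaged differently from the argument the paper gestures at. The paper gives no explicit derivation for this theorem: it is cited from the earlier work and justified by the remark that all classical and neutral right rules of $\LE$ are invertible, so that an \eec\ succedent can be eagerly decomposed. Concretely, the paper's implicit route is to invert $\cimp R$ and $\iimp R$, reducing the two sides to $\bigwedge\Gamma,\neg A^{ec}\seq\bot$ and $\bigwedge\Gamma\seq A^{ec}$ respectively, and then to identify these two via the general remark (from $\Gamma\seq A$ one always gets $\Gamma,\neg A\seq\bot$) together with the converse for \eec\ formulas, which is exactly the paper's subsequent proposition proved by cutting against $\neg\neg A^{ec}\iimp A^{ec}$ (item~(\ref{dn})). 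You instead cut against the generalised \emph{modus ponens} $(A\wedge(A\cimp B^{ec}))\iimp B^{ec}$ (item~(7)); since item~(7) is itself a consequence of item~(\ref{dn})—unfold $A\cimp B^{ec}$ to $\neg(A\wedge\neg B^{ec})$, derive $\neg\neg B^{ec}$, and eliminate the double negation—the two proofs rest on the same underlying lemma family, namely the double-negation-elimination property of \eec\ formulas established by induction on the \eec\ grammar. Your version buys a shorter, purely cut-based top level at the price of leaning on item~(7) as a black box; the paper's version makes the role of invertibility explicit. Both correctly locate the only place where the \eec\ hypothesis is used (the forward direction), and your observation that the backward direction holds for arbitrary succedents is accurate. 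Two small points to tidy up: the $\init$ rule of $\LE$ is atomic, so the identity instances $\bigwedge\Gamma\seq\bigwedge\Gamma$ and the use of left weakening on the hypothesis should be flagged as appeals to the admissible general identity and weakening of $\LE$; and in your sketch of the induction for item~(7) you should be careful not to invoke item~(\ref{dn}) circularly—the clean order is to prove item~(\ref{dn}) first by induction on the \eec\ grammar (the externally classical base cases follow because their right rules already introduce them from $\bot$-succedent premises) and then derive items~(7) and~(\ref{ue}) from it.
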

This sums up well, proof theoretically, the {\em ecumenism} of Prawitz' original proposal: consequence relations are intrinsically intuitionistic, but have a classical behavior when proving a formula that eventually will behave classically.

Moreover, observe that, from a proof $\pi$ of $\Gamma\seq A$ in $\LE$, we can derive $\Gamma,\neg A\seq \bot$:
$$
\infer[\neg L]{\Gamma,\neg A\seq \bot}{\deduce{\Gamma,\neg A \seq A}{\pi^w}}
$$
where $\pi^w$ is the weakened version of $\pi$.
The other direction does not hold since
$\vdash_{\LE} \neg \neg A\not\seq A$, in general.
However, for \eec\ formulas the converse  also holds.
\begin{proposition}
If $\Gamma,\neg A^{ec}\seq \bot$ is provable in $\LE$ so it is $\Gamma\seq A^{ec}$. 
\end{proposition}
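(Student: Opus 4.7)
The strategy is to reduce the claim to the double-negation elimination property for \eec\ formulas, that is, item~(\ref{dn}) above. Given a proof $\pi$ of $\Gamma,\neg A^{ec}\seq\bot$, I first apply $\neg R$ to obtain a proof of $\Gamma\seq\neg\neg A^{ec}$, and then combine this with a proof of $\Gamma,\neg\neg A^{ec}\seq A^{ec}$ by a single instance of $\cut$ to conclude $\Gamma\seq A^{ec}$.

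The missing intermediate sequent $\Gamma,\neg\neg A^{ec}\seq A^{ec}$ is produced from item~(\ref{dn}) as follows. That item gives a proof $\rho$ of $\seq \neg\neg A^{ec}\iimp A^{ec}$. A single $\iimp L$ applied to the two identity sequents $\neg\neg A^{ec}\seq\neg\neg A^{ec}$ and $A^{ec}\seq A^{ec}$ (both derivable by the standard identity expansion from atomic $\init$) yields $\neg\neg A^{ec}\iimp A^{ec},\,\neg\neg A^{ec}\seq A^{ec}$, and $\cut$ against $\rho$ then delivers $\neg\neg A^{ec}\seq A^{ec}$. Since left weakening is admissible in $\LE$, this lifts to $\Gamma,\neg\neg A^{ec}\seq A^{ec}$, which is exactly what is needed for the final cut. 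Schematically:
\[
\infer[\cut]{\Gamma\seq A^{ec}}{
  \infer[\neg R]{\Gamma\seq\neg\neg A^{ec}}{\pi:\ \Gamma,\neg A^{ec}\seq\bot}
  \qquad
  \Gamma,\neg\neg A^{ec}\seq A^{ec}
}
\]

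I do not foresee a genuine obstacle: the whole argument is a three-step composition of $\neg R$, $\cut$, and item~(\ref{dn}), appealing only to the standard admissibility of identity expansion and left weakening in $\LE$. The hypothesis that $A$ be \eec\ is essential, however: the converse direction stated just above uses only $\neg L$ together with weakening and therefore holds for arbitrary $A$, whereas the direction established here depends on $\vdash_{\LE}\neg\neg A^{ec}\iimp A^{ec}$, which is known to fail for a general formula (cf.\ item~4 in the earlier list).
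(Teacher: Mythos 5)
Your proof is correct and follows essentially the same route as the paper's: apply $\neg R$ to obtain $\Gamma\seq\neg\neg A^{ec}$, then cut against $\neg\neg A^{ec},\Gamma\seq A^{ec}$, which is obtained from item~(\ref{dn}). The only difference is that you spell out explicitly (via $\iimp L$, identity expansion, cut, and weakening) how the implicational theorem~(\ref{dn}) yields the sequent $\Gamma,\neg\neg A^{ec}\seq A^{ec}$, a step the paper takes for granted.
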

\begin{proof}
Since $\vdash_{\LE} \neg \neg A^{ec}\seq A^{ec}$ (see~\ref{dn}), then
$$
\infer[\cut]{\Gamma\seq A^{ec}}
{\infer{\Gamma\seq \neg\neg A^{ec}}{\Gamma,\neg A^{ec}\seq \bot}&\deduce{\neg\neg A^{ec},\Gamma\seq A^{ec}}{}}
$$
\end{proof}
This corroborates the idea that, in an ecumenical system with stoup, formulas in the classical context should hold classical subformulas of \eec\ formulas.
The stoup, on the other hand, would carry the {\em intuitionistic} or {\em neutral} information. 

We are now ready to describe the ecumenical system with stoup, where the connections between the ``primitive" sequent calculus in $\LE$ and the ``pure" sequent calculus in $\LCE$ is established as follows:
\begin{itemize}
\item[-] A sequent of the form $\Gamma , \neg \Delta   \seq \Pi$ will be translated as $\Gamma  \seq \Delta ; \Pi$ for some set $\Delta$ of negated formulas.
\item[-] A sequent of the form $\Gamma  \seq \Delta ; \Pi$ will be translated as $\Gamma ,  \neg \Delta  \seq \Pi$.
\item[-] The empty stoup will be translated as $\bot$.
\end{itemize}
As already mentioned, formulas will move over contexts depending on the {\em polarity}.
\begin{definition}\label{def:pol}
		An ecumenical formula is called {\em negative} if its main connective is classical or the negation, and {\em positive} otherwise (we will use $N$ for negative and $P$ for positive formulas).
\end{definition}
Figure~\ref{fig:LCE} brings the rules for the ecumenical pure systems with 
stoup ($\LCE$).
Observe that rules in $\LCE$ determine positive/negative phases in derivations, and the dynamic of the rules for classical connectives in $\LCE$ is as follows: Negative formulas in the classical contexts are eagerly decomposed; if a positive formula in the right context  is chosen to be worked on, it is placed in the stoup $\Pi$, and treated intuitionistically.

\begin{figure}[t]
{\sc Intuitionistic and neutral Rules}
$$
\infer[{\wedge L}]{\Seq{\Gamma, A \wedge B}{ \Pi} {\Delta}}{\Seq{\Gamma,
A, B}{ \Pi}{\Delta}} 
\quad
\infer[{\wedge R}]{\Seq{\Gamma}
{A \wedge B}{\Delta}}{\deduce{\Seq{\Gamma} {A}{\Delta}}{} &\deduce{\Seq{\Gamma}{  B}{\Delta}}{}} 
\quad
\infer[{\vee_i L}]{\Seq{\Gamma, A \vee_i B}{ \Pi}{\Delta}
}{\deduce{\Seq{\Gamma, A}{ \Pi}{\Delta}}{} &\deduce{\Seq{\Gamma,B}{\Pi}{\Delta}}{}} 
$$
$$
\infer[{\vee_i R_j}]{\Seq{\Gamma}{ A_1\vee_i A_2}{\Delta}}{\Seq{\Gamma}{ A_j}{\Delta}} 
\quad
\infer[{\iimp L}]{\Seq{\Gamma, A\iimp B}{
\Pi}{\Delta}}
{\Seq{\Gamma, A\iimp B}{ A}{\Delta} & \Seq{\Gamma , B} 
{\Pi}\Delta} 
\quad
\infer[{\iimp R}]{\Seq{\Gamma}
{A\iimp B}{\Delta}}{\Seq{\Gamma,A}{ B}{\Delta}}
$$
$$
\infer[{\neg L}]{\Seq{\Gamma,\neg A}{\cdot}{\Delta}
}{\Seq{\Gamma, \neg A}{  A}{\Delta}}
\qquad
\infer[{\neg R}]{\Seq{\Gamma} 
{\cdot}{\Delta,\neg A}}{\Seq{\Gamma,A}{  \cdot}{\Delta}}
\qquad
\infer[\exists_i L]{\Seq{\Gamma, \exists_ix.A}{ \Pi}{\Delta}}{\Seq{\Gamma, A[y/x]}{ \Pi}{\Delta}}
$$
$$
\infer[\exists_i R]{\Seq{\Gamma}{ \exists_ix.A}{\Delta }}{\Seq{\Gamma}{ A[y/x]}{\Delta}}
\quad
\infer[\forall L]{\Seq{\Gamma,\forall x.A}{ \Pi}{\Delta}}{\Seq{\Gamma,\forall x.A,A[y/x]}{ \Pi}{\Delta}}
\quad
\infer[\forall R]{\Seq{\Gamma}{ \forall x.A}{\Delta}}{\Seq{\Gamma}{ A[y/x]}{\Delta}}
$$
{\sc Classical Rules}

\vspace{.3cm}
\resizebox{\textwidth}{!}{$
\infer[\vee_cL]{\Seq{\Gamma,  A \vee_c B}{\cdot}{\Delta}}{\Seq{\Gamma,A}{ \cdot}{\Delta} & \Seq{\Gamma,B}{ \cdot}{\Delta}} 
\quad
\infer[\vee_cR]{\Seq{\Gamma}{ \cdot}{ A \vee_c B,\Delta}}{\Seq{\Gamma}{ \cdot}{A,B,\Delta}}
\quad
\infer[\cimp L]{\Seq{\Gamma, A \rightarrow_c B}{\cdot}{\Delta} }{\Seq{\Gamma, A \rightarrow_c B}{ A}{\Delta}&\Seq{\Gamma, B}{\cdot}{\Delta}}
$}
$$
\infer[\cimp R]{\Seq{\Gamma} {\cdot}{A
\rightarrow_c B, \Delta}}{\Seq{\Gamma, A}{  \cdot}{B,\Delta}} 
\qquad
\infer[L_c]{\Seq{\Gamma, p_c}{\cdot}{\Delta}}{\Seq{\Gamma,p_i}{ \cdot}{\Delta}}
\qquad
\infer[R_c]{\Seq{\Gamma}{ \cdot}{p_c,\Delta}}{\Seq{\Gamma}{ \cdot}{p_i,\Delta}}
$$
$$
\infer[\exists_c L]{\Seq{\Gamma ,\exists_cx.A}{ \cdot}{\Delta}}{\Seq{\Gamma,  A[y/x]}{ \cdot}{\Delta}}
\qquad
\infer[\exists_c R]{\Seq{\Gamma}{ \cdot}{\exists_cx.A,\Delta}}{\Seq{\Gamma}{\cdot}{A[y/x],\exists_cx.A,\Delta}}
$$
{\sc Initial, Decision and Structural Rules}
$$
\begin{array}{c@{\qquad}c@{\qquad}c@{\qquad}c}
\infer[\init]{\Seq{\Gamma,p_i}{ p_i}{\Delta}}{} 
&
\infer[\D]{\Seq{\Gamma}{ \cdot}{P,\Delta}}{\Seq{\Gamma}{ P}{P,\Delta}} 
&
\infer[\store]{\Seq{\Gamma}{ N}{\Delta}}{\Seq{\Gamma}{ \cdot}{N,\Delta}} 
&
\infer[{\W}]{\Seq{\Gamma}{A}{\Delta}}{\Seq{\Gamma}{\cdot}{\Delta}}
\end{array}
$$
{\sc Cut Rules}
$$
\begin{array}{c@{\qquad}c}
\infer[P-\cut]{\Seq{\Gamma}{ \Pi}{\Delta}}{\Seq{\Gamma}{ P}{\Delta} & \Seq{P,\Gamma}{ \Pi}{\Delta}} 
&
\infer[N-\cut]{\Seq{\Gamma}{ \Pi}{\Delta}}{\Seq{\Gamma}{\Pi^*}{\Delta, N} & \Seq{N,\Gamma}{ \Pi}{\Delta}} 
\end{array}
$$
\caption{Ecumenical pure system $\LCE$. In rules 
$\forall R$, $\exists_c L$ and $\exists_i L$, $y$ is fresh; $N$ is negative and $P$ is positive; $p$ is atomic; $\Pi^*$ is either empty or some $P\in\Delta$.}\label{fig:LCE}
\end{figure}

The analogy with focusing~\cite{andreoli01apal} stops there, though: explicit weakening in the stoup is needed for completeness, as the next example shows.
\begin{example}\label{ex:weak}
The sequent 
$$\neg B,A\cimp B, A\vdash C$$
is provable in $\LE$, where the succedent $C$ is weakened. This means that, in $\LCE$,  the following sequents should be provable
$$
\Seq{\neg B,A\cimp B, A}{C}{\cdot}\qquad \Seq{\cdot}{C}{B,A\wedge\neg B,\neg A}
$$
But the stoup is necessarily erased in the process.

Observe that, in $\LC$, sequents with non-empty stoup do not have a classical interpretation.  In fact, {\bf none} of the sequents above are provable in $\LC$, if $C$ is a positive formula.
\end{example}



\section{Correctness of the systems}\label{sec:corr}

We start by stating standard proof theoretic results.
\begin{lemma}\label{lemma:wc}
In $\LCE$:
\begin{enumerate}
\item[i.] The rules $\cvee L, \cvee R,\cimp L, \cimp R,\neg L,\neg R, L_c, R_c,\cexists L,\cexists R$ and $\D$ are {\em invertible}, that is, in any application of such rules, if the conclusion is a provable nested sequent so are the premises.
\item[ii.] The rules $\wedge L, \wedge R, \ivee L, \iimp R,\iexists L, \forall L,\forall R$ and $\store$ are {\em totally invertible},  that is, they are invertible and can be applied in any contexts. 
\item[iii.] Classical weakening and contraction are admissible 
		$$
		\infer[\W_c]{\Seq{\Gamma,\Gamma'}{\Pi}{\Delta,\Delta'}}{\Seq{\Gamma}{\Pi}{\Delta}} \qquad 
		\infer[\C_c]{\Seq{\Gamma,\Gamma'}{\Pi}{\Delta,\Delta'}}{\Seq{\Gamma,\Gamma',\Gamma'}{\Pi}{\Delta,\Delta',\Delta'}}
		$$
\item[iv.] The general form of initial axioms are admissible \label{init}
$$
\infer[\ginit]{\Seq{\Gamma,A}{A}\Delta}{} \qquad 
\infer[\gcinit]{\Seq{\Gamma,A}{\Pi}{\Delta,A}}{} 
$$
	\end{enumerate}
\end{lemma}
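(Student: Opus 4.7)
The plan is a standard sequent-calculus meta-theoretic package: I would prove items (i) and (ii) by induction on the height of the derivation, then bootstrap to (iii) and (iv) using the invertibility lemmas just proved. The items are naturally interdependent: the contraction part of (iii) will call on invertibility, and the compound cases of (iv) will use weakening together with invertibility to decompose the formula on both sides.

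For (i) and (ii) I proceed by induction on the height of the derivation of the conclusion. If the last rule applied is already the rule being inverted (with the relevant formula principal), then the premise(s) sit directly above, up to weakening. Otherwise, the rule in question commutes with the last rule: inductively obtain the inverted premises of the subderivations, then reapply the last rule. The distinction between plain invertibility (i) and total invertibility (ii) only matters for the shape of the stoup in the conclusion — the classical rules of (i) act on a specific stoup configuration (empty, or ``$\cdot$'', or holding the principal formula), while the neutral/intuitionistic rules of (ii) are insensitive to the stoup. For $\D$, invertibility says that placing a positive formula $P$ already present in $\Delta$ into the stoup does not disturb provability; this follows by the same height induction, trivially when $\D$ itself is the last rule, and by commutation otherwise.

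Item (iii) splits into weakening and contraction. Classical weakening in $\Gamma$ and $\Delta$ is proved by a direct induction on the height of the derivation; the only points requiring attention are the axiom $\init$ (which remains an axiom after weakening), the rules $\D$ and $\store$ (whose side formulas in $\Delta$ only grow), and the eigenvariable rules $\forall R$, $\iexists L$, $\cexists L$, where one renames $y$ to avoid clashes with the newly added context. Contraction is then established by induction on height: when the contracted formula is not principal in the last rule one appeals to the inductive hypothesis and reapplies the rule; when it is principal, one first applies the corresponding invertibility from (i) or (ii) to expose the subformulas in both premises of the copy, then concludes by the inductive hypothesis on strictly smaller formulas or heights.

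For (iv), I induct on the size of $A$. The base case $A = p_i$ is exactly $\init$; the base case $A = p_c$ reduces to $\init$ after one application of $L_c$ followed by $R_c$. Each compound case applies the relevant left and right rules to $A$ (using $\D$ or $\store$ to bring the formula into or out of the stoup when its polarity requires) and closes by the inductive hypothesis on the immediate subformulas, with admissible weakening used to align the side contexts. The main obstacle is the interaction between the stoup and the classical right-context rules during the commutation steps of (i) and in the principal-formula cases of contraction: a rule like $\D$ moves a positive formula into the stoup while $\store$ dually sends a negative one out, so one must verify that the invertibility of $\cvee R$, $\cimp R$, $R_c$ and $\cexists R$ is not disturbed by interleaved $\D$/$\store$ steps. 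Once this commutation is checked, everything else proceeds along well-trodden lines.
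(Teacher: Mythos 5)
Your proposal matches the paper's proof in both structure and dependencies: height induction for invertibility and weakening, contraction bootstrapped from invertibility plus weakening, and the general initial axioms by mutual induction on the formula (the paper only spells out the quantifier cases, which you handle the same way via the left/right rules and $\D$/$\store$). No gaps; this is essentially the same argument, just written out in more detail than the paper's terse version.
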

\begin{proof} The proofs are by standard induction on the height of derivations. The proof of admissibility of $\W_c$ does not depend on any other result, while the admissibility of $\C_c$ depends on the invertibility results and the admissibility of weakening. 

The proof of admissibility of the general initial axioms is by mutual induction. 
Below we show the cases for quantifiers 
where, by induction hypothesis, instances of the axioms hold for the premises.
\[
\infer[\cexists L]{\Seq{\cexists x.A,\Gamma}\cdot{\cexists x.A,\Delta}}
{\infer[\cexists R]{\Seq{A[y/x],\Gamma}\cdot{\cexists x.A,\Delta}}
{\infer[\gcinit]{\Seq{A[y/x],\Gamma}\cdot{\cexists x.A,A[x/y],\Delta}}{}}}
\qquad
\infer[\forall R]{\Seq{\forall x.A,\Gamma}{\forall x.A}{\Delta}}
{\infer[\forall L]{\Seq{\forall x.A,\Gamma}{A[y/x]}{\Delta}}
{\infer[\ginit]{\Seq{\forall x.A,A[y/x],\Gamma}{A[y/x]}{\Delta}}{}}}  
\]
\end{proof}

The following shows that $\LCE$ is correct and complete w.r.t. $\LE$.
\begin{theorem} The sequent 
$\Seq{\Gamma}{\Pi}{\Delta}$ is provable in $\LCE$ iff $\Gamma,\neg\Delta\seq\Pi$ is provable in $\LE$. 
\end{theorem}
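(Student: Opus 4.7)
The plan is to prove both directions by induction on the height of the given derivation, simulating each rule of the source system by a short derivation in the target system. The underlying translation is $\tau(\Seq{\Gamma}{\Pi}{\Delta}) = \Gamma,\neg\Delta\seq\Pi$, as already described in Section~\ref{sec:interp}, with the convention that an empty stoup is translated as the formula $\bot$.

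For the direction $\LCE\Rightarrow\LE$, I would induct on the height of the $\LCE$ derivation. The intuitionistic and neutral rules translate verbatim to the same-named rule of $\LE$; each classical right-rule (\eg\ $\cvee R$, $\cimp R$, $R_c$, $\cexists R$) is recovered in $\LE$ by first applying $\neg L$ to unpack the relevant negated formula from $\neg\Delta$, exposing its body to the corresponding right-rule of $\LE$, and classical left-rules are handled symmetrically. The two structural rules specific to $\LCE$ are the interesting cases: the decision rule $\D$ is simulated by a single application of $\neg L$ on the negated copy $\neg P\in\neg\Delta$, while the store rule $\store$ is more delicate: from the translated premise $\Gamma,\neg N,\neg\Delta\seq\bot$ we apply $\neg R$ to obtain $\Gamma,\neg\Delta\seq\neg\neg N$, then close with a cut against the $\LE$-derivable sequent $\neg\neg N\seq N$ (item~\ref{dn}), which is available precisely because $N$ is negative in $\LCE$, matching $N$ being \eec. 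The rules $P$-cut and $N$-cut translate to the $\LE$ cut, interleaved with $\neg R/\neg L$ and the same double-negation step in the case $\Pi^*=\cdot$.

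For the direction $\LE\Rightarrow\LCE$, I would induct on the height of the $\LE$ derivation of $\Gamma,\neg\Delta\seq\Pi$, with the case analysis keyed on whether the principal formula of the last rule lies in $\Gamma$ or in $\neg\Delta$. When the principal formula is in $\Gamma$ and is not a negation, the matching $\LCE$ rule applies directly. When $\neg L$ in $\LE$ acts on some $\neg A$ with $A\in\Delta$, its premise translates (writing $\Delta=\Delta',A$) to $\Seq{\Gamma}{A}{\Delta',A}$; from here we reach $\Seq{\Gamma}{\cdot}{\Delta}$ by invoking the decision rule $\D$ when $A$ is positive, or the total invertibility of $\store$ followed by admissible contraction when $A$ is negative, both supplied by Lemma~\ref{lemma:wc}. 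A $\neg R$ in $\LE$ is mirrored by $\neg R$ followed by $\store$ in $\LCE$, and the $\LE$ cut is mirrored by $P$-cut or $N$-cut depending on the polarity of the cut formula, again using invertibility of $\store$ in the negative case to align the stoups.

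The main obstacle is the polarity bookkeeping, specifically the store simulation in the forward direction: this step forces the $\LCE$ notion of negative formula to coincide with the $\LE$ class of \eec\ formulas, since only for these is $\neg\neg N\iimp N$ derivable in $\LE$. Some extra care is also required for the empty-stoup/$\bot$ convention --- notably in the cut cases where $\Pi^*$ may be a positive formula from $\Delta$ rather than empty --- and for checking that the inductive invariants survive the admissibility of weakening and contraction used throughout, but these follow from the standard proof-theoretic facts established in Lemma~\ref{lemma:wc}.
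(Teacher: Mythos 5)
Your overall strategy coincides with the paper's: induction on the height of the derivation, a rule-by-rule simulation under the translation $\Gamma,\neg\Delta\seq\Pi$ (empty stoup read as $\bot$), with $\D$ simulated by a single $\neg L$, and $\store$ simulated by $\neg R$ followed by a cut against $\neg\neg N\seq N$, which is available because every $\LCE$-negative formula is \eec. Those parts are correct and match the paper's proof. (One small imprecision: the negative formulas do not \emph{coincide} with the \eec\ class --- $p_c\wedge p_c$ is \eec\ but positive --- your $\store$ step only needs the inclusion ``negative $\Rightarrow$ \eec'', which does hold.)

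There is, however, a genuine gap in your $\LE\Rightarrow\LCE$ direction. Your case analysis is keyed on whether the principal formula lies in $\Gamma$ or in $\neg\Delta$ (plus $\neg R$ and cut), and so it silently omits the cases where the last $\LE$ rule is a classical \emph{right} rule acting on the succedent $\Pi$. For $\cvee R$, $\cimp R$ and $R_c$ this omission is harmless (apply the same-named $\LCE$ rule and then $\store$), but for $\cexists R$ the naive rule-by-rule simulation breaks down: the $\LE$ premise of $\Gamma,\neg\Delta\seq\cexists x.A$ is $\Gamma,\forall x.\neg A,\neg\Delta\seq\bot$, whose translation $\Seq{\Gamma,\forall x.\neg A}{\cdot}{\Delta}$ carries a brand-new universally quantified hypothesis on the \emph{left}, whereas the $\LCE$ rule $\cexists R$ instead keeps an instance $A[t/x]$ on the right. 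The paper bridges this mismatch with an $N$-$\cut$ on $\neg\forall x.\neg A$ together with a derived proof that $\cexists x.A$ follows from $\neg\forall x.\neg A$ (an adaptation of theorem~(\ref{ue})). This is precisely the case that makes the $\LE$-to-$\LCE$ translation non-cut-free --- which is why the paper must then establish cut-elimination for $\LCE$ (Theorem~\ref{thm:lcecut}) to recover cut-completeness --- and your proposal neither handles the case nor registers this consequence.
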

\begin{proof}
The only interesting cases are the ones involving classical connectives. 
\begin{itemize}
\item[-] Case $\cvee R$. Suppose that $\Seq{\Gamma}{\cdot}{\Delta,A\cvee B}$ is provable in 
$\LCE$ with proof 
$$
\infer[\cvee R]{\Seq{\Gamma}{\cdot}{A \vee_c B,\Delta}}{\Seq{\Gamma}{\cdot}{A,B,\Delta}}
$$
By inductive hypothesis, $\Gamma,\neg A,\neg B,\neg\Delta \seq \bot$ is provable in $\LE$. Hence
$$
\infer[\neg L]{\Gamma,\neg (A \vee_c B),\neg\Delta \seq \bot}
{\infer[\cvee R]{\Gamma,\neg\Delta \seq A \vee_c B}
{\Gamma,\neg A,\neg B,\neg\Delta \seq \bot}}
$$
On the other hand, suppose that $\Gamma\seq A\cvee B$ is provable in $\LE$ with proof
$$
\infer[\cvee R]{\Gamma
\seq A \vee_c B}{\Gamma,\neg A,\neg B \seq \bot}
$$
By inductive hypothesis, $\Seq{\Gamma}{ \cdot}{ A, B}$ is provable in $\LCE$. Thus
$$
\infer[\store]{\Seq{\Gamma}{A\cvee B}{\cdot}}
{\infer[\cvee R]{\Seq{\Gamma}{ \cdot}{A\cvee B}}
{\deduce{\Seq{\Gamma}{ \cdot}{A,B}}{}}}
$$
\item[-] Case $\cimp R$. Suppose that $\Seq{\Gamma}{\cdot}{\Delta,A\cimp B}$ is provable in 
$\LCE$ with proof 
$$
\infer[\cimp R]{\Seq{\Gamma}{\cdot}{A \cimp B,\Delta}}{\Seq{\Gamma,A}{\cdot}{B,\Delta}}
$$
By inductive hypothesis, $\Gamma,A,\neg B,\neg\Delta \seq \bot$ is provable in $\LE$. Hence
$$
\infer[\neg L]{\Gamma,\neg (A \cimp B),\neg\Delta \seq \bot}{
\infer[\cimp R]{\Gamma,\neg\Delta \seq A \cimp B}
{\Gamma,A,\neg B,\neg\Delta \seq \bot}}
$$
On the other hand, suppose that $\Gamma\seq A\cimp B$ is provable in 
$\LE$ with proof 
$$
\infer[\cimp R]{\Gamma\seq A\cimp B}{\Gamma, A,\neg B\seq\bot}
$$
By inductive hypothesis, $\Seq{\Gamma,A}{\cdot}{B}$ is provable in $\LCE$. Hence
$$
\infer[\store]{\Seq{\Gamma}{A\cimp B}\cdot}
{\infer[\cimp R]{\Seq{\Gamma}\cdot{A\cimp B}}
{\deduce{\Seq{\Gamma,A}\cdot{B}}{}}}
$$

\item[-] Case $\cexists R$. Suppose that $\Seq{\Gamma}{\cdot}{\cexists x. A,\Delta}$ is provable in 
$\LCE$ with proof 
$$
\infer[\cimp R]{\Seq{\Gamma}{\cdot}{\cexists x. A,\Delta}}{\Seq{\Gamma}{\cdot}{A[t/x],\cexists x. A,\Delta}}
$$
By inductive hypothesis, $\Gamma,\neg A[y/x],\neg\cexists x. A,\neg\Delta \seq \bot$ is provable in $\LE$. Hence
$$
\infer[\neg L]{\Gamma,\neg \cexists x. A,\neg\Delta \seq \bot}{
\infer[\cimp R]{\Gamma,\neg \cexists x. A,\neg\Delta \seq \cexists x. A}
{
\infer[\forall L]{\Gamma,\neg \cexists x. A, \forall x. \neg A,\neg\Delta \seq \cdot}
{\Gamma,\neg \cexists x. A,\neg A[y/x], \forall x. \neg A,\neg\Delta \seq \cdot}}}
$$
On the other hand, suppose that $\Gamma\seq \cexists x.A$ is provable in 
$\LE$ with proof 
$$
\infer[\cexists R]{\Gamma\seq \cexists x.A}{\deduce{\Gamma, \forall x.\neg A\seq\bot}{\pi}}
$$
By inductive hypothesis, $\Seq{\Gamma,\forall x.\neg A}{\cdot}{\cdot}$ is provable in $\LCE$. Hence
$$
\infer[\store]{\Seq{\Gamma}{\cexists x.A}\cdot}
{\infer[N-\cut]{\Seq{\Gamma}\cdot{\cexists x.A}}
{\infer[\neg R]{\Seq{\Gamma}\cdot{\cexists x.A,\neg\forall x.\neg A}}
{\deduce{\Seq{\Gamma,\forall x.\neg A}\cdot{\cexists x.A}}{\pi_w}}&
\infer={\Seq{\Gamma,\neg\forall x.\neg A}\cdot{\cexists x.A}}{}}}
$$
where $\pi_w$ represents the translation of the weakened version of $\pi$ and the double bars in  the right branch indicates an adapted proof of~\ref{ue}.

\item[-] Case $\cimp L$.
Suppose that $\Seq{\Gamma,A\cimp B}{\cdot}{\Delta}$ is provable in 
$\LCE$ with proof 
$$
\infer[\cimp L]{\Seq{\Gamma,A\cimp B}{ \cdot}{\Delta}}{
\Seq{\Gamma,A\cimp B}{A}{\Delta} & \Seq{\Gamma, B}{\cdot}{\Delta}}
$$
By inductive hypothesis, $\Gamma,A\cimp B,\neg\Delta \seq A$ and $\Gamma, B,\neg\Delta \seq \bot$
are provable in $\LE$. Hence
$$
\infer[\cimp L]{\Gamma,A\cimp B,\neg\Delta \seq \bot}
{\Gamma,A\cimp B,\neg\Delta \seq A &
\Gamma, B,\neg\Delta \seq \bot}
$$
and vice-versa. The other left-rule cases are similar.
\item[-] Case $\D$. Suppose that $\Seq{\Gamma}{\cdot}{\Delta,P}$ is provable in 
$\LCE$ with proof 
$$
\infer[\D]{\Seq{\Gamma}{\cdot}{\Delta,P}}{\Seq{\Gamma}{P}{\Delta,P}}
$$
By inductive hypothesis, $\Gamma,\neg P,\neg\Delta \seq P$ is provable in $\LE$. Hence
$$
\infer[\neg L]{\Gamma,\neg P,\neg\Delta \seq \bot}
{\Gamma,\neg P,\neg\Delta \seq P}
$$
\item[-] Case $\store$. Suppose that $\Seq{\Gamma}{N}{\Delta}$ is provable in 
$\LCE$ with proof 
$$
\infer[\store]{\Seq{\Gamma}{N}{\Delta}}{\Seq{\Gamma}{\cdot}{\Delta,N}}
$$
By inductive hypothesis, $\Gamma,\neg N,\neg\Delta \seq \bot$ is provable in $\LE$. Hence
$$
\infer[\cut]{\Gamma,\neg\Delta \seq N}
{\infer[\neg R]{\Gamma,\neg\Delta \seq \neg\neg N}{\Gamma,\neg\Delta,\neg N \seq \bot}&
\infer={\neg\neg N\seq N}{}}
$$
where the double bar indicates the multiple-steps proof of the fact that, for negative formulas, $N\equiv \neg\neg N$.
\item[-] Cases $\cut$. The $P-\cut$ rule in $\LCE$ trivially corresponds to $\cut$ in $\LE$.
Suppose that $\Seq{\Gamma}{ \Pi}{\Delta}$ is provable in $\LCE$ with proof
$$
\infer[N-\cut]{\Seq{\Gamma}{ \Pi}{\Delta}}{\Seq{\Gamma}{ \Pi^*}{\Delta, N} & \Seq{N,\Gamma}{ \Pi}{\Delta}}
$$
By inductive hypothesis, $\Gamma,\neg N,\neg\Delta \seq \Pi^*$ and $\Gamma, N,\neg\Delta \seq \Pi$ are provable in $\LE$ with proofs $\pi_1$ and $\pi_2$ respectively. 

If $\Pi^*=\cdot$, then
$$
\infer[\cut]{\Gamma,\neg\Delta\seq \Pi}
{\infer[\neg R]{\Gamma,\neg\Delta\seq \neg\neg N}
{\deduce{\Gamma,\neg\Delta,\neg N\seq \bot}{\pi
_1}}&
\infer[\cut]{\neg\neg N,\Gamma,\neg\Delta\seq \Pi}
{\infer={\neg\neg N,\Gamma,\neg\Delta\seq N}{}&
\deduce{\neg\neg N,N,\Gamma,\neg\Delta\seq\Pi}{\pi_2^w}}}
$$
where $\pi_2^w$ is the weakened version of $\pi_2$, and the double bar corresponds to the proof of~\ref{dn}. 

If $\Pi^*=P\in\Delta$, then the left premise derivation is substituted by
$$
\infer[\neg R]{\Gamma,\neg\Delta\seq \neg\neg N}
{\infer[\cut]{\Gamma,\neg\Delta,\neg N\seq \bot}
{\deduce{\Gamma,\neg\Delta,\neg N\seq P}{\pi_1}&
\infer=[P\in\Delta]{P,\Gamma,\neg\Delta,\neg N\seq \bot}{}}}
$$
\end{itemize}
\end{proof}
Observe that, other than the cut rules, $\store$ introduces cut in the translation between proofs from $\LCE$ to $\LE$, while $\cexists R$ introduces cuts in the other way around. Since $\LE$ has the cut-elimination property~\cite{DBLP:journals/synthese/PimentelPP21}, the translation from $\LCE$ to $\LE$ is not problematic. However, for proving cut-completeness from $\LE$ to $\LCE$, we need to prove that the later has the cut-elimination property.

\subsection{Cut-elimination}\label{sub:cut}
A logical connective is called {\em harmonious} in a certain proof system if there exists a certain balance between the rules defining it. For example, in natural deduction based systems, harmony is ensured when introduction/elimination rules do not contain insufficient/excessive amounts of information~\cite{DBLP:conf/ictac/Diaz-CaroD21}. 
In sequent calculus, this property is often guaranteed by the admissibility of a general initial axiom ({\em identity-expansion}) and of the cut rule ({\em cut-elimination})~\cite{DBLP:journals/tcs/MillerP13}. 

In Lemma~\ref{lemma:wc} we proved identity-expansion for $\LCE$.
In the following, we will complete the proof of harmony for $\LCE$, proving that it enjoys the cut-elimination property. This will also guarantee cut-completeness from $\LE$ to $\LCE$, as mentioned above.

Proving admissibility of cut rules in sequent based systems with multiple contexts is often tricky, since the cut formulas can change contexts during cut reductions.
This is the case for $\LCE$. 
The proof is by mutual induction, with inductive measure $(n,m)$ where  $m$ is the cut-height, the cumulative height of derivations above the cut, and $n$ is the ecumenical weight of the cut-formula, defined as

\begin{center}
\begin{tabular}{lc@{\qquad}l}
$\ew(p_i)=\ew(\bot)=0$ & &
$\ew(A\star B)=\ew(A)+\ew(B)+1$ if $\star\in\{\wedge,\iimp,\ivee\}$\\
$\ew(p_c)=4$ & & 
$\ew(\heartsuit A)=\ew(A)+1$ if $\heartsuit\in\{\neg,\iexists x.,\forall x.\}$ \\
$\ew(\cexists x. A)=\ew(A)+4$ & & 
$\ew(A\circ B)=\ew(A)+\ew(B)+4$ if $\circ\in\{\cimp,\cvee\}$
\end{tabular}
\end{center}
\noindent
Intuitively, the ecumenical weight measures the amount of extra information needed (the negations added) to define classical connectives from intuitionistic and neutral ones.
%

\begin{theorem}\label{thm:lcecut}
The rules $N-\cut$ and $P-\cut$ are admissible in $\LCE$.
\end{theorem}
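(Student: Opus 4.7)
The plan is to prove the admissibility of $P-\cut$ and $N-\cut$ simultaneously by lexicographic induction on the pair $(n,m)$, where $n=\ew(F)$ for the cut-formula $F$ and $m$ is the cut-height, freely using the invertibility and structural-rule admissibility results of Lemma~\ref{lemma:wc}. First I dispatch the axiom cases: if either premise is an instance of $\init$ (or, by Lemma~\ref{lemma:wc}, of $\ginit$ or $\gcinit$), the cut collapses to the other premise, possibly after a use of admissible weakening $\W_c$. Next come the commutative cases, where the cut-formula is a side formula in the last rule of one of the premises; I permute the cut upward past that rule, producing cuts on the same $F$ but of strictly smaller cut-height. Eigenvariable conditions in $\forall R,\iexists L,\cexists L$ are handled by the usual renaming, and the fact that permuting past $\D$ or $\store$ can change the form of cut (from $P-\cut$ to $N-\cut$ or vice versa) is precisely why the two cut rules must be treated jointly.

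In the principal cases, $F$ is introduced by both premises' last rules and the reduction produces cuts on strict subformulas, which have strictly smaller ecumenical weight regardless of polarity. For the purely intuitionistic and neutral connectives ($\wedge,\ivee,\iimp,\iexists,\forall,p_i$) the reductions follow the standard template. The negation case requires a two-step reduction: after the principal step the ingredients are $\Seq{\Gamma,A}{\cdot}{\Delta}$ (from $\neg R$) and $\Seq{\neg A,\Gamma}{A}{\Delta}$ (from $\neg L$). I first perform $N-\cut$ on $\neg A$ at strictly smaller cut-height to eliminate $\neg A$ from the right premise, obtaining $\Seq{\Gamma}{A}{\Delta}$, and then cut on $A$: via $P-\cut$ when $A$ is positive, and via $\store$ followed by $N-\cut$ when $A$ is negative. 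Either way the second cut has strictly smaller weight.

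The hard part will be the principal cases for the classical connectives ($\cvee,\cimp,\cexists, p_c$), where a naive reduction to cuts on the subformulas leaves residual occurrences in the classical context $\Delta$ because $P-\cut$ cannot remove a positive formula from $\Delta$. This is exactly what the $+4$ offset of $\ew$ is engineered to absorb: using the intuitionistic encodings recalled in equivalences~(\ref{and})--(\ref{foe1}), e.g.\ $A\cimp B\leftrightarrow_i \neg(A\wedge\neg B)$, the cut is replaced by a short chain of smaller cuts --- an $N-\cut$ on the outer negation, an intuitionistic cut on the associated conjunction or disjunction, and further $N-\cut$s on the negated subformulas --- each of strictly smaller ecumenical weight, so that the $IH$ applies. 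The classical atom cases $L_c, R_c$ reduce similarly, passing through the intuitionistic counterpart $p_i$ (weight $0$), which lies strictly below $p_c$ (weight $4$); the $\cexists R$ case additionally uses the double-negation elimination (\ref{dn}) and the quantifier shift (\ref{ue}) in the same spirit as in the completeness proof above.
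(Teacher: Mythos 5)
Your overall architecture matches the paper's: mutual induction on the lexicographic pair (ecumenical weight, cut-height), joint treatment of $P-\cut$ and $N-\cut$, special handling when $\D$ or $\W$ occurs in the left premise, and the observation that the $+4$ offset in $\ew$ is what must pay for the classical principal cases. Your axiom, commutative, and negation cases are fine.

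The gap is in the classical principal cases --- which you correctly single out as the hard part but then resolve by an unworked appeal to the encodings $(A\cimp B)\leftrightarrow_i\neg(A\wedge\neg B)$, etc., claiming the cut decomposes into a chain of cuts ``each of strictly smaller ecumenical weight.'' Two things block this as stated. First, the classical left rules and $\cexists R$ retain their principal formula in a premise (e.g.\ $\cimp L$ has premise $\Seq{\Gamma,A\cimp B}{A}{\Delta}$, and $\cexists R$ keeps $\cexists x.A$ in the classical context), so after the principal step the cut formula still occurs inside a subderivation; removing it requires either a global substitution lemma replacing $A\cimp B$ by $\neg(A\wedge\neg B)$ throughout that subderivation (which you have not stated, and which is not height-preserving for free since the rule shapes differ), or a residual cut on the \emph{same} formula justified by smaller cut-height --- which is exactly what the paper does (its $N-\cut_2$ in the $P\cimp Q$ reduction and in the $\cexists x.P$ reduction), and which contradicts your claim that every cut in the replacement chain has strictly smaller weight. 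Second, converting the premise of the right rule (e.g.\ $\Seq{\Gamma,A}{\cdot}{B,\Delta}$ for $\cimp R$) into a usable premise of a cut on a negated subformula must account for occurrences of the positive subformula $B$ that get derelicted inside that subderivation; the paper handles this with an explicit transformation $\pi^{\equiv}$ replacing each application of $\D$ on $Q$ by $\neg L$ on $\neg Q$, and your sketch has no counterpart. So the plan is in the right spirit, but the decisive reductions --- precisely the ones the measure was designed for --- are not carried out, and the variant you describe does not obviously terminate under the measure you announce.
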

\begin{proof} The dynamic of the proof is the following: cut applications either move up in the proof, \ie\ the cut-height is reduced, or are substituted by simpler cuts of the same kind, \ie\ the ecumenical weight is reduced, as in usual cut-elimination reductions. 
The cut instances alternate between positive and negative (and vice-versa) in the principal cases, where the polarity of the subformulas flip.
We will detail the main cut-reductions.
\begin{itemize}
\item[-] Base cases. Consider the derivation 
$$
\infer[P-\cut]{\Seq{\Gamma}{\Pi}{\Delta}}
{\deduce{\Seq{\Gamma}{p_i}{\Delta}}{\pi}&
\infer[init]{\Seq{\Gamma,p_i}{\Pi}{\Delta}}{}}
$$
If $p_i$ is principal, then $\Pi=p_i$ and the the derivation reduces to $\pi$.

If  $p_i$ is not principal, then there is an  atom $q_i\in\Gamma\cap\Pi$  and  the reduction is a trivial one. Similar analyses hold for $N-\cut$, when the left premise is an instance of $\init$, as well as for the other axioms.
\item[-] Non-principal cases. In all the cases where the cut-formula is not principal in one of the premises, the cut moves upwards. The only exceptions are when: 

- dereliction is applied in the left premise
$$
\infer[N-\cut]{\Seq\Gamma\Pi{\Delta,P}}
{\infer[\D]{\Seq\Gamma\cdot{\Delta,P,N}}
{
\deduce{\Seq\Gamma{P}{\Delta,P,N}}{\pi_1}}
&
\deduce{\Seq{N,\Gamma}\Pi{\Delta,P}}{\pi_2}
} 
$$
In this case, we substitute the version of $N-\cut$ for absorbing the dereliction
$$
\infer[N-\cut]{\Seq\Gamma{\Pi}{\Delta,P}}
{\deduce{\Seq\Gamma{P}{\Delta,P,N}}{\pi_1}
&
\deduce{\Seq{N,\Gamma}\Pi{\Delta,P}}{\pi_2}
}
$$
- weakening is applied in the left premise
$$
\infer[N-\cut]{\Seq\Gamma\Pi{\Delta,P}}
{\infer[\W]{\Seq\Gamma{P}{\Delta,P,N}}
{
\deduce{\Seq\Gamma{\cdot}{\Delta,P,N}}{\pi_1}}
&
\deduce{\Seq{N,\Gamma}\Pi{\Delta,P}}{\pi_2}
} 
$$
In this case, we substitute the version of $N-\cut$ for absorbing the weakening
$$
\infer[N-\cut]{\Seq\Gamma{\Pi}{\Delta,P}}
{\deduce{\Seq\Gamma{\cdot}{\Delta,P,N}}{\pi_1}
&
\deduce{\Seq{N,\Gamma}\Pi{\Delta,P}}{\pi_2}
}
$$

\item[-] Principal cases. If the cut formula is principal in both premises, then  we need to be extra-careful with the polarities. We show two most representative cases.

- $N=P\cimp Q$, with $P,Q$ positive.
$$
\infer[N-\cut_0]{\Seq\Gamma\cdot{\Delta}}
{\infer[\cimp R]{\Seq\Gamma\cdot{\Delta,P\cimp Q}}
{\deduce{\Seq{\Gamma,P}\cdot{\Delta,Q}}{\pi_1}}&
\infer[\cimp L]{\Seq{\Gamma,P\cimp Q}\cdot{\Delta}}
{\deduce{\Seq{\Gamma,P\cimp Q}{P}{\Delta}}{\pi_2}&
\deduce{\Seq{\Gamma,Q}\cdot{\Delta}}{\pi_3}}}
$$
reduces to

\resizebox{\textwidth}{!}{$
\infer[N-\cut_1]{\Seq\Gamma\cdot{\Delta}}
{\infer[\neg R]{\Seq\Gamma\cdot{\Delta,\neg Q}}
{\deduce{\Seq{\Gamma,Q}\cdot{\Delta}}{\pi_3}}&
\infer[P-\cut]{\Seq{\Gamma,\neg Q}\cdot{\Delta}}
{\infer[N-\cut_2]{\Seq{\Gamma,\neg Q}{P}{\Delta}}
{\infer[\cimp R]{\Seq{\Gamma,\neg Q}{\cdot}{\Delta,P\cimp Q}}
{\deduce{\Seq{\Gamma,\neg Q, P}{\cdot}{\Delta,Q}}{\pi_1^w}}&\deduce{\Seq{\Gamma,\neg Q, P\cimp Q}{P}{\Delta}}{\pi_2^w}}&
\deduce{\Seq{\Gamma,\neg Q, P}{\cdot}{\Delta}}{\pi_1^\equiv}}}
$
}

\noindent
where $\pi_1^\equiv$ is the same as $\pi_1$ where every application of the rule $\D$ over $Q$ in the above derivation is substituted by an application of $\neg$ over $\neg Q$. Observe that  the cut-formula of  $N-\cut_1$ has lower ecumenical weight than  $N-\cut_0$, while the cut-height of  $N-\cut_2$ is smaller than $N-\cut_0$. Finally, observe that this is a non-trivial cut-reduction: usually, the cut over the implication is replaced by a cut over $Q$ first. Due to polarities, if $Q$ is positive, then $\neg Q$ is negative and cutting over it will add to the left context the classical information $Q$, hence mimicking the behavior of formulas in the right input context.

-  $N=\cexists x. P$, with $P$ positive. 
$$
\infer[N-\cut_0]{\Seq\Gamma\cdot{\Delta}}
{\infer[\cexists R]{\Seq\Gamma\cdot{\Delta,\cexists x. P}}
{\deduce{\Seq{\Gamma}\cdot{\Delta,\cexists x. P,P[y/x]}}{\pi_1}}&
\infer[\cexists L]{\Seq{\Gamma,\cexists x. P}\cdot{\Delta}}
{\deduce{\Seq{\Gamma,P[y/x]}{\cdot}{\Delta}}{\pi_2}}}
$$
reduces to

\resizebox{\textwidth}{!}{$
\infer[N-\cut_1]{\Seq\Gamma\cdot{\Delta}}
{\infer[\neg R]{\Seq\Gamma\cdot{\Delta,\neg P[y/x]}}
{\deduce{\Seq{\Gamma, P[y/x]}\cdot{\Delta}}{\pi_2}}
&
\infer[N-\cut_2]{\Seq{\Gamma,\neg P[y/x]}{\cdot}{\Delta}}
{\deduce{\Seq{\Gamma,\neg P[y/x]}{\cdot}{\Delta,\cexists x.P}}{\pi_1^\equiv}&
\infer[\cexists L]{\Seq{\Gamma,\neg P[y/x],\cexists x.P}{\cdot}{\Delta}}
{\deduce{\Seq{\Gamma,\neg P[y/x],P[z/x]}{\cdot}{\Delta}}{\pi_2[z/y]}}}}
$}

\noindent
where the same observations for the above case hold, and $\pi_2[z/y]$ indicates the renaming of fresh variables in the derivation $\pi_2$. 
\end{itemize}
\end{proof}
We finish this section noting that polarities play an important role in the cut-elimination process. In fact, without them, adding a general cut rule would collapse the system to classical logic.
\begin{example}
If the cut rule
$$
\infer[\cut]{\Seq\Gamma\Pi\Delta}
{\Seq\Gamma{\Pi^*}{\Delta,A}&
\Seq{A,\Gamma}\Pi\Delta}
$$
was admissible in $\LCE$ for an arbitrary formula $A$, then $\Seq\cdot{A\ivee \neg A}\cdot$ would have the proof
$$
\infer[\cut]{\Seq\cdot{A\ivee \neg A}\cdot}
{\infer[\D]{\Seq\cdot\cdot{A\ivee \neg A}}
{\infer[\ivee]{\Seq\cdot{A\ivee \neg A}{A\ivee \neg A}}
{\infer[\neg R]{\Seq\cdot{\neg A}{A\ivee \neg A}}
{\infer[\D]{\Seq{A}{\cdot}{A\ivee \neg A}}
{\infer[\ivee]{\Seq{A}{A\ivee \neg A}{A\ivee \neg A}}
{\infer[\ginit]{\Seq{A}{A}{A\ivee \neg A}}{}}}}}}&
\infer[\ginit]{\Seq{A\ivee \neg A,\Gamma}{A\ivee \neg A}\cdot}{}}
$$
\end{example}

\section{Ecumenical modalities}\label{sec:modal}



We will now extend the propositional fragment of $\LCE$ with modalities.

The language of \emph{(propositional, normal) modal formulas} consists of a denumerable set $\Prop$ of propositional symbols and a 
set of propositional connectives enhanced with the unary \emph{modal operators} $\square$ and $\lozenge$ concerning necessity and possibility, respectively~\cite{blackburn_rijke_venema_2001}. 

The semantics of modal logics is often determined by means of \emph{Kripke models}. Here, we will follow the approach in~\cite{Sim94}, where a modal logic is characterized by the respective interpretation of the modal model in the meta-theory (called {\em meta-logical characterization}). 

Formally, given a variable $x$, we recall the standard translation $\tradm{\cdot}{x}$ from modal formulas into first-order formulas with at most one free variable, $x$, as follows: if $p$ is atomic, then 
$\tradm{p}{x}=p(x)$; $\tradm{\bot}{x}=\bot$;  for any binary connective $\star$, $\tradm{A\star B}{x}=\tradm{A}{x}\star \tradm{B}{x}$; for the modal connectives
\begin{center}
	\begin{tabular}{lclc@{\quad}lcl}
		$\tradm{\square A}{x}$ & = & $\forall y ( \relfo(x,y) \rightarrow \tradm{A}{y})$ & &
		$\tradm{\lozenge A}{x}$ & = & $\exists y (\relfo(x,y) \wedge \tradm{A}{y})$
		\\
	\end{tabular}
\end{center}
where $\rel(x,y)$ is a binary predicate. 

Opening a parenthesis: such a translation has, as underlying justification, the interpretation of alethic modalities in a Kripke model $\m=(W,R,V)$:
\begin{equation}\label{eq:kripke}
\begin{array}{l@{\qquad}c@{\qquad}l}
\m, w \models \square A & \mbox{ iff } & \text{for all } v \text{ such that  }w \rel v, \m, v \models A.\\
\m, w \models \lozenge A & \mbox{ iff } & \text{there exists } v \text{ such that  } w \rel v	\text{ and } \m, v\models A. 
\end{array}
\end{equation}
$\rel(x,y)$ then represents the \emph{accessibility relation} $R$ in a Kripke frame. This intuition  can be made formal based on the one-to-one correspondence between classical/intuitionistic translations and Kripke modal models~\cite{Sim94}. We close this parenthesis by noting that this justification is only motivational, aiming at introducing modalities. 

The object-modal logic OL is then characterized in the first-order meta-logic ML as
$$
\vdash_{OL} A \quad\mbox{ iff } \quad \vdash_{ML} \forall x. \tradm{A}{x}
$$
Hence, if ML is classical logic (CL), the former definition characterizes the classical modal logic $\logick$~\cite{blackburn_rijke_venema_2001}, while if it is intuitionistic logic (IL), then it characterizes the intuitionistic modal logic $\logicik$~\cite{Sim94}. 

In this work, we will adopt ecumenical logic as the meta-theory (given by the system $\LCE$), hence characterizing what we will define as the {ecumenical modal logic} $\EK$. 

\subsection{An ecumenical view of modalities}\label{sec:view}
The language of \emph{ecumenical modal formulas} consists of a denumerable set $\Prop$ of (ecumenical) propositional symbols and the set of ecumenical connectives enhanced with unary \emph{ecumenical modal operators}. Unlike for the classical case, there is not a canonical definition of constructive or intuitionistic modal logics. Here we 
will mostly follow the approach in~\cite{Sim94} for justifying our choices for the ecumenical interpretation for {\em possibility} and {\em necessity}.

The ecumenical  translation $\tradme{\cdot}{x}$ from propositional ecumenical formulas into $\LCE$ is defined in the same way as the modal translation $\tradm{\cdot}{x}$ in the last section.
For the case of modal connectives, observe that, due to Proposition~\ref{prop:ev}, the interpretation of ecumenical consequence should be essentially {\em intuitionistic}. 
This implies that the box modality is a  {\em neutral connective}. The diamond, on the other hand, has two possible interpretations: classical and intuitionistic, since its leading connective is an existential quantifier. Hence we should have the ecumenical modalities: $\square, \ilozenge, \clozenge$, determined by the translations
\vspace{-.1cm}
\begin{center}
	\begin{tabular}{lclc@{\qquad}lcl}
	$\tradme{\square A}{x}$ & = & $\forall y ( \relfo(x,y) \iimp \tradme{A}{y})$\\
		$\tradme{\ilozenge A}{x}$ & = & $\iexists y (\relfo(x,y) \wedge \tradme{A}{y})$ & &
		$\tradme{\clozenge A}{x}$ & = & $\cexists y (\relfo(x,y) \wedge \tradme{A}{y})$
		\\
	\end{tabular}
\end{center}
We will denote by $\EK$ the ecumenical modal logic meta-logically characterized by  $\LCE$ via $\tradme{\cdot}{x}$.
Polarities will be extended to the modal case smoothly, that is, formulas with outermost connective classical or negation are negative, all the others are positive. Relational atoms are not polarizable.

In Figure~\ref{fig:labEK} we present the pure, labeled ecumenical system $\labEK$. 
Observe that
$$\vdash_{\labEK} x:\clozenge A\iequiv x:\neg\square\neg A
$$
On the other hand, $\square$ and $\ilozenge$ are not inter-definable.
Finally, if $A^{ec}$ is eventually externally classical, then
$$
\vdash_{\labEK} x:\square A^{ec}\iequiv x:\neg\clozenge\neg A^{ec}
$$
This means that, when restricted to the classical fragment, $\square$ and $\clozenge$ are duals. 
This reflects well the ecumenical nature of the defined modalities.

\begin{figure}[htp]
{\sc Intuitionistic and neutral Rules}
$$
\infer[{\wedge L}]{\Seq{\Gamma, x:A \wedge B}{ \Pi} {\Delta}}{\Seq{\Gamma,
x:A, x:B}{ \Pi}{\Delta}} 
\quad
\infer[{\wedge R}]{\Seq{\Gamma}
{x:A \wedge B}{\Delta}}{\deduce{\Seq{\Gamma} {x:A}{\Delta}}{} &\deduce{\Seq{\Gamma}{x:B}{\Delta}}{}} 
$$
$$
\infer[{\vee_i L}]{\Seq{\Gamma, x:A \vee_i B}{ \Pi}{\Delta}
}{\deduce{\Seq{\Gamma, x:A}{ \Pi}{\Delta}}{} &\deduce{\Seq{\Gamma,x:B}{\Pi}{\Delta}}{}} 
\quad
\infer[{\vee_i R_j}]{\Seq{\Gamma}{ x:A_1\vee_i A_2}{\Delta}}{\Seq{\Gamma}{ x:A_j}{\Delta}} 
$$
$$
\infer[{\iimp L}]{\Seq{\Gamma, x:A\iimp B}{
\Pi}{\Delta}}
{\Seq{\Gamma, x:A\iimp B}{ x:A}{\Delta} & \Seq{\Gamma , x:B} 
{\Pi}\Delta} 
\quad
\infer[{\iimp R}]{\Seq{\Gamma}
{x:A\iimp B}{\Delta}}{\Seq{\Gamma,x:A}{x:B}{\Delta}}
$$
$$
\infer[{\neg L}]{\Seq{\Gamma,x:\neg A}{\cdot}{\Delta}
}{\Seq{\Gamma, x:\neg A}{  x:A}{\Delta}}
\quad
\infer[{\neg R}]{\Seq{\Gamma} 
{\cdot}{x:\neg A,\Delta}}{\Seq{\Gamma,x:A}{\cdot}{\Delta}}
$$
{\sc Classical Rules}
$$
\infer[\cimp L]{\Seq{\Gamma, x:A \rightarrow_c B}{\cdot}{\Delta} }{\Seq{\Gamma, x:A \rightarrow_c B}{x:A}{\Delta}&\Seq{\Gamma, x: B}{\cdot}{\Delta}}
\quad
\infer[\cimp R]{\Seq{\Gamma} {\cdot}{ x:A\rightarrow_c B,\Delta}}{\Seq{\Gamma, x:A}{  \cdot}{x:B,\Delta}} 
$$
$$
\infer[\vee_cL]{\Seq{\Gamma,  x: A \vee_c B}{\cdot}{\Delta}}{\Seq{\Gamma,x:A}{ \cdot}{\Delta} & \Seq{\Gamma,x:B}{ \cdot}{\Delta}} 
\quad
\infer[\vee_cR]{\Seq{\Gamma}{\cdot}{x:A \vee_c B, \Delta}}{\Seq{\Gamma}{ \cdot}{x:A,x:B,\Delta}}
$$
$$
\infer[L_c]{\Seq{\Gamma, x:p_c}{\cdot}{\Delta}}{\Seq{\Gamma,x:p_i}{ \cdot}{\Delta}}
\quad
\infer[R_c]{\Seq{\Gamma}{\cdot}{x:p_c,\Delta}}{\Seq{\Gamma}{ \cdot}{x:p_i,\Delta}} 
$$
{\sc Modal rules}
$$
	\infer[\square L]{\Seq{xRy, x:\Box A,\Gamma}{\Pi}{\Delta}}{\Seq{xRy,y:A,x:\Box A,\Gamma}{\Pi}{\Delta}}
	\quad
	\infer[\square R]{\Seq{\Gamma}{x:\Box A}{\Delta}}{\Seq{xRy, \Gamma}{y:A}{\Delta}}
	\quad
	\infer[\ilozenge L]{\Seq{x:\ilozenge A,\Gamma}{\Pi}{\Delta}}{\Seq{xRy,  y: A,\Gamma}{\Pi}{\Delta}}
	$$
	$$
	\infer[\ilozenge R]{\Seq{xRy, \Gamma}{x:\ilozenge A}{\Delta}}{\Seq{xRy,\Gamma}{y:A}{\Delta}}
	\quad
	\infer[\clozenge L]{\Seq{x:\clozenge A,\Gamma}{\cdot}{\Delta} }{\Seq{xRy,  y: A,\Gamma}{\cdot}{\Delta}}
	\quad
	\infer[\clozenge R]{\Seq{xRy,\Gamma}{\cdot}{x:\clozenge A,\Delta}}{\Seq{xRy,\Gamma}{\cdot}{y:A,x:\clozenge A,\Delta}}
	$$
{\sc Initial, Decision and Structural Rules}
$$
\infer[\init_i]{\Seq{\Gamma,x:A}{x:A}{\Delta}}{} 
\qquad
\infer[\init_c]{\Seq{\Gamma,x:A}{\Pi}{x:A,\Delta}}{} 
$$
$$
\infer[\D]{\Seq{\Gamma}{ \cdot}{x:P,\Delta}}{\Seq{\Gamma}{ x:P}{x:P,\Delta}} 
\qquad
\infer[\store]{\Seq{\Gamma}{x:N}{\Delta}}{\Seq{\Gamma}{ \cdot}{x:N,\Delta}}
\qquad
\infer[\W]{\Seq{\Gamma}{x:A}{\Delta}}{\Seq{\Gamma}{ \cdot}{\Delta}} 
$$
{\sc Cut Rules}
$$
\begin{array}{c@{\qquad}c}
\infer[P-\cut]{\Seq{\Gamma}{ \Pi}{\Delta}}{\Seq{\Gamma}{x:P}{\Delta} & \Seq{x:P,\Gamma}{ \Pi}{\Delta}} 
&
\infer[N-\cut]{\Seq{\Gamma}{ \Pi}{\Delta}}{\Seq{\Gamma}{ \Pi^*}{\Delta, x:N} & \Seq{x:N,\Gamma}{ \Pi}{\Delta}} 
\end{array}
$$\caption{Ecumenical pure modal system $\labEK$. In rules $\square R,\ilozenge L, \clozenge L$,  the eigenvariable $y$ does not occur free in any formula of the conclusion; $N$ is negative and $P$ is positive; $p$ is atomic; $\Pi^*$ is either empty or some $z:P\in\Delta$.}\label{fig:labEK}
\end{figure}

\subsection{Ecumenical birelational models}\label{sec:bi}
The ecumenical birelational Kripke semantics, which is an extension of the proposal in~\cite{luiz17} to modalities, was presented in~\cite{DBLP:conf/dali/MarinPPS20}.

\begin{definition}\label{def:ekripke} A {\em birelational Kripke model} is a quadruple  
$\m=(W,\leq,R,V)$ where $(W,R,V)$ is a Kripke model
such that $\wld$ is partially ordered with order $\leq$, $R\subset W\times W$ is a binary relation, the 
satisfaction function $V:\langle W,\leq\rangle\rightarrow  \langle2^\Prop,\subseteq\rangle$ is monotone and:

\noindent
F1. For all worlds $w,v,v'$, if $wRv$ and $v \leq v'$, there is a $w'$ such that $w \leq w'$ and $w'Rv'$;

\noindent
F2. For all worlds $w', w, v$, if $w \leq w'$ and $wRv$, there is a $v'$ such that $w'Rv'$ and $v \leq v'$.

An {\em ecumenical modal Kripke model} is a birelational Kripke model such that truth of an ecumenical formula at a point $w$ is the
smallest relation $\modelse$ satisfying 

\noindent
$
\begin{array}{l@{\quad}c@{\quad}l}
\m,w\modelse p_{i} & \mbox{ iff } & p_{i}\in \val(w);\\
\m,w\modelse A\wedge B & \mbox{ iff } & \m,w\modelse A \mbox{ and } \m,w\modelse B;\\
\m,w\modelse A\vee_i B & \mbox{ iff } & \m,w\modelse A \mbox{ or } \m,w\modelse B;\\
\m,w\modelse A\iimp B & \mbox{ iff } & \text{for all } v \text{ such that  }w \leq v,   
\m,v\modelse A
\mbox{ implies } 
\m,v\modelse B;\\
\m,w\modelse \neg A & \mbox{ iff } & \text{for all } v \text{ such that  }w \leq v,   
\m,v\not\modelse A;\\
\m,w \modelse \bot & & \mbox{never holds};\\
\m, w \modelse \square A & \mbox{ iff } & \text{for all }  v,w' \text{ such that  }w\leq w'\text{ and }  w'\rel v, \m, v \modelse A.\\
\m, w \modelse \ilozenge A & \mbox{ iff } & \text{there exists } v \text{ such that  } w \rel v	\text{ and } \m, v\modelse A. \\
\m,w\modelse p_c  & \mbox{ iff } & \m,w\modelse \neg(\neg p_i);\\
\m,w\modelse A\vee_c B & \mbox{ iff } & \m,w\modelse \neg(\neg A\wedge \neg B);\\
\m,w\modelse A\cimp B & \mbox{ iff } & \m,w\modelse  \neg(A\wedge \neg B).\\
\m, w \modelse \clozenge A & \mbox{ iff } & \m,w\modelse  \neg\square\neg A.
\end{array}
$

We say that a formula $A$ is {\em valid} in a model $\m=(W,\leq,R,V)$ if for all $w\in  W$ we have $w\modelse A$. A formula $A$ is {\em valid} in a frame $\langle W,\leq,R\rangle$ if, for all valuations $V$, $A$ is valid in the model $(W,\leq,R,V)$. Finally, we say a formula is {\em valid}, if it is valid in all frames.
\end{definition}
Since, restricted to intuitionistic and neutral connectives, $\modelse$ is the usual birelational interpretation $\models$ for $\IK$~\cite{Sim94}, and since the classical connectives are interpreted 
via the neutral ones using the double-negation translation, an ecumenical modal Kripke model coincides with the standard birelational Kripke model for intuitionistic modal logic $\IK$. Hence the following result easily holds from the similar result for $\IK$.
\begin{theorem}[\cite{DBLP:conf/dali/MarinPPS20}]\label{thm:scLEci} The system
$\labEK$ is sound and complete w.r.t.~the ecumenical modal Kripke semantics, that is, $\vdash_{\labEK} x: A$ iff $\modelse A$.
\end{theorem}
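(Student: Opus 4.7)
The plan is to reduce the theorem to the known soundness and completeness of a labeled sequent calculus $\mathsf{labIK}$ for intuitionistic modal logic $\IK$ with respect to birelational Kripke semantics (Simpson~\cite{Sim94}), exploiting the paragraph immediately following Definition~\ref{def:ekripke}: the clauses for the classical connectives and for $\clozenge$ are given as double-negations of their neutral or intuitionistic counterparts. First I would define a translation $\langle\cdot\rangle$ from ecumenical modal formulas into pure $\IK$-formulas that is the identity on $\{p_i,\bot,\wedge,\ivee,\iimp,\neg,\square,\ilozenge\}$ and uses the double-negation encoding on the remaining cases, namely $\langle p_c\rangle=\neg\neg p_i$, $\langle A\cvee B\rangle=\neg(\neg\langle A\rangle\wedge\neg\langle B\rangle)$, $\langle A\cimp B\rangle=\neg(\langle A\rangle\wedge\neg\langle B\rangle)$, and $\langle\clozenge A\rangle=\neg\square\neg\langle A\rangle$.

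A routine structural induction on $A$ using the semantic clauses of Definition~\ref{def:ekripke} then yields $\m,w\modelse A$ iff $\m,w\models\langle A\rangle$ for every birelational Kripke model $\m$: the neutral and intuitionistic cases are identical to the $\IK$ clauses by definition of $\modelse$, and the classical cases unfold tautologically by construction of $\langle\cdot\rangle$. On the proof-theoretic side, one lifts the correctness argument of Section~\ref{sec:corr} to the modal setting, showing that a derivation of $\Seq{\Gamma}{\Pi}{\Delta}$ in $\labEK$ can be translated to a derivation of $\langle\Gamma\rangle,\neg\langle\Delta\rangle\seq\langle\Pi\rangle$ in $\mathsf{labIK}$, and conversely. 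The modal rules $\square L/R$ and $\ilozenge L/R$ match rule-for-rule; the classical propositional rules are handled exactly as in Theorem~\ref{thm:classical-right}'s setting; and $\clozenge L/R$ is treated analogously to $\cexists L/R$, using the equivalence $\vdash_{\labEK} x:\square A^{ec}\iequiv x:\neg\clozenge\neg A^{ec}$ already isolated in the paper as the modal analogue of~(\ref{ue}). Combined with soundness and completeness of $\mathsf{labIK}$, the chain $\vdash_{\labEK}x:A$ iff $\vdash_{\mathsf{labIK}}x:\langle A\rangle$ iff $\models\langle A\rangle$ iff $\modelse A$ closes the argument.

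The main obstacle will be the faithful simulation of the classical modal rules, particularly $\clozenge R$. Just as the back-translation of $\cexists R$ from $\LE$ to $\LCE$ in Section~\ref{sec:corr} required introducing an $N$-cut against the $\neg\neg$-equivalence for eventually externally classical formulas, here the simulation of $\clozenge R$ via a derivation built around $\neg\square\neg$ will require cuts against the modal duality above, and one must invoke the cut-elimination theorem (Theorem~\ref{thm:lcecut}) adapted to $\labEK$ to ensure that these cuts can be eliminated and do not compromise the analyticity we rely on downstream. A secondary and purely bookkeeping issue is checking that the fresh-variable side-conditions on $\square R$, $\ilozenge L$ and $\clozenge L$ transfer correctly to the eigenvariable conditions of the corresponding rules in $\mathsf{labIK}$; this is immediate since the translation leaves labels and relational atoms untouched.
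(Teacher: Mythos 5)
Your proposal is correct and follows essentially the same route as the paper, which justifies the theorem precisely by observing that the ecumenical clauses for the classical connectives and $\clozenge$ are double-negation readings of the neutral ones, so that ecumenical models coincide with standard birelational $\IK$-models and the result reduces to Simpson's soundness and completeness for $\IK$; you merely make explicit the translation and the proof-theoretic simulation that the paper leaves to the cited prior work. One small correction: the equivalence needed to simulate the $\clozenge$ rules is the unrestricted $\vdash_{\labEK} x:\clozenge A\iequiv x:\neg\square\neg A$ (which the paper states for all $A$), not the duality $x:\square A^{ec}\iequiv x:\neg\clozenge\neg A^{ec}$ restricted to eventually externally classical formulas.
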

We end this section with a small note on the relationship between the semantics and the dynamics of proofs. 
On a bottom-up reading of proofs, the $\store$ rule is a delay on applying rules over classical connectives. It corresponds to  moving the formula up w.r.t. $\leq$ in the birelational semantics. The rule $\Box R$, on the other hand, slides the formula to a fresh new world, related to the former one through the relation $R$. Finally, rule $\neg R$ moves up the formula w.r.t. $\leq$.

\section{A nested system for ecumenical modal logic}\label{sec:nested}

The  criticism regarding system $\labEK$ is that
it includes {\em labels} in the technical machinery, hence 
allowing one to write sequents that cannot always be interpreted within the ecumenical modal language.

This section is devoted to present the pure {\em label free} calculus for ecumenical modalities introduced in~\cite{DBLP:conf/wollic/MarinPPS21}, where every basic object of the calculus can be translated as a formula in the language of the logic. 

The inspiration comes from  Stra{\ss}burger's {\em nested system} for $\IK$~\cite{DBLP:conf/fossacs/Strassburger13}. The main idea is to add nested layers to sequents, which intuitively corresponds to worlds in a relational structure~\cite{Fitting:2014,Brunnler:2009kx,Poggiolesi:2009vn}.

The  structure of a nested
sequent for ecumenical modal logics is a
tree whose nodes are multisets of formulas, just like in~\cite{DBLP:conf/fossacs/Strassburger13}, with the relationship between parent and child in the tree represented by bracketing $\BR{\cdot}$.
The difference however is that the
ecumenical formulas can be {\em left inputs} (in the left contexts -- marked with a full circle $\lf{}$), {\em right inputs} (in the classical right contexts -- marked with a triangle $\ct{}$) or a {\em single right output} (the stoup -- marked with a white circle $\rt{}$).  

\begin{definition}
 \emph{Ecumenical nested sequents} are defined in terms of a grammar of \emph{input sequents} 
 (written $\Lx$) and \emph{full sequents} (written $\Rx$) where the left/right input formulas are denoted by $\lf{A}$ and $\ct{A}$, respectively,  and $\rt{A}$ denote the output formula. When the distinction between input and full sequents is not essential or cannot be made explicit, we will use $\Dx$ to stand for either case. 
	\begin{align*}
	\Lx &\coloneqq \left. 
	\emptyset \mid 
	\lf{A}, \Lx \mid \ct{A}, \Lx\mid  \BR{\Lx}\right. &
	\Rx &\coloneqq \rt{A},\Lx \mid [\Rx],\Lx &
	\Dx &\coloneqq \Lx\mid \Rx
	\end{align*}
As usual, we allow sequents to be empty, 
and we consider sequents to be equal modulo associativity and commutativity of the comma. 

We write 
$\Rxb$ for the result of replacing an output formula 
from $\Rx$
by $\rt{\bot}$,
while  
$\Lxb$ represents the result of adding anywhere of the input context $\Lx$ the output formula  $\rt{\bot}$. 
Finally, 
$\Dx*$ is the result of erasing an output formula (if any) from 
$\Dx$.
\end{definition}

Observe that full sequents $\Rx$ necessarily contain exactly one output-like formula, having the form
$\Lx[1],\BR{\Lx[2],\BR{ \ldots,\BR{\Lx[n],\rt{A}}}\ldots}$.

\begin{example}
The nested sequent $\ct{\clozenge A},\BR{\rt{\neg B}}, \BR{\lf{C\wedge D}}$ represents the following tree of sequents 
\begin{center}
\begin{tikzpicture}[level distance=3em, 
			level 1/.style={sibling distance=6em},
			every node/.style = {align=center}]]
			\node {$\Seq\cdot\cdot{\clozenge A}$}
			child[thick] { node {$\Seq\cdot{\neg B}{\cdot}$}}
			child[thick] { node {$\Seq{C\wedge D}\cdot{\cdot}$}};
		\end{tikzpicture}
\end{center}
\end{example}

The next definition (of contexts) allows for identifying subtrees within nested sequents, which is necessary for introducing inference rules in this setting.
\begin{definition}
	An \emph{$n$-ary context} $\Dx\i{\scriptscriptstyle 1}{}\dots\i{n}{}$ is like a sequent but contains $n$ pairwise distinct numbered \emph{holes} $\Ex{}$ wherever a formula may otherwise occur. 	
	It is a \emph{full} or a \emph{input} context when $\Dx = \Rx$ or $\Lx$ respectively.
		
Given $n$ sequents $\Dx[1], \dotsc, \Dx[n]$, we write $\Dx{\Dx[1]}\dots{\Dx[n]}$  for the sequent where the i-th hole in $\Dx\i{\scriptscriptstyle 1}{}\dots\i{n}{}$ has been replaced by $\Dx[i]$ (for $1 \le i \le n$), assuming that the result is well-formed, \emph{i.e.}, there is at most one output formula. If $\Dx[i]=\emptyset$ the  hole is removed.

Given two nested contexts $\Rx^i\{\}=\Dx[1]^i,\BR{\Dx[2]^i,\BR{ \ldots,\BR{\Dx[n]^i,\{ \}}}\ldots}$, $i\in\{1,2\}$, their {\em merge}\footnote{As observed in~\cite{Poggiolesi:2009vn,DBLP:conf/tableaux/Lellmann19}, the merge is a ``zipping" of the two nested sequents along the path from the root to the hole.} is
\[
\Rx^1\otimes\Rx^2\{\}=\Dx[1]^1,\Dx[1]^2,\BR{\Dx[2]^1,\Dx[2]^1,\BR{ \ldots,\BR{\Dx[n]^1,\Dx[n]^2,\{ \}}}\ldots}
\]
\end{definition}
Figure~\ref{fig:nEK} presents the nested sequent system $\nEK$ for ecumenical modal logic $\EK$. 

\begin{figure}[t]
{\sc Intuitionistic and neutral Rules}
$$
\infer[\lf\wedge]{\Rx{\lf{A \wedge B}}}{
	\Rx{\lf A, \lf B}
}
\quad
\infer[\rt\wedge]{\Lx{\rt{A \wedge B}}}{
	\Lx{\rt A}
	&
	\Lx{\rt B}
}
\quad
\infer[\lf\ivee]{\Rx{\lf{A \ivee B}}}{
	\Rx{\lf A}
	&
	\Rx{\lf B}
}
\quad
\infer[\rt\vee_{i_j}]{\Lx{\rt{A_1 \ivee A_2}}}{
	\Lx{\rt A_j}
}
\quad 
\infer[\lf\bot]{\Rx{\lf\bot}}{}
$$
$$
\infer[\lf\iimp]{\Rx{\lf{A \iimp B}}}{
	\Rx*{\lf{A \iimp B},\rt A}
	&
	\Rx{\lf B}
}
\quad
\infer[\rt\iimp]{\Lx{\rt{A \iimp B}}}{
	\Lx{\lf A, \rt B}
}
\quad
\infer[\lf\neg]{\Rxb{\lf{\neg A}}}{
	\Rx*{\lf{\neg A}, \rt A}
}
\quad
\infer[\ct\neg]{\Rxb{\ct{\neg A}}}{
	\Rxb{\lf A}
}
$$
{\sc Classical Rules}
$$
\infer[\lf\cimp]{\Rxb{\lf{A \cimp B}}}{
	\Rx*{\lf{A \cimp B}, \rt A}
	&
	\Rxb{\lf B}}
\quad
\infer[\ct\cimp]{\Rxb{\ct{A \cimp B}}}{
	\Rxb{\lf A, \ct{B}}
}
\quad
\infer[\lf\cvee]{\Rxb{\lf{A \cvee B}}}{
	\Rxb{\lf A}
	&
	\Rxb{\lf B}
}
$$
$$
\infer[\ct\cvee]{\Rxb{\ct{A \cvee B}}}{
	\Rxb{\ct{A}, \ct{B}}
}
\quad
\infer[\lf p_{c}]{\Rxb{\lf{p_c}}}{
	\Rxb{\lf{p_i}}}
\quad
\infer[\ct p_{c}]{\Rxb{\ct{p_c}}}{
	\Rxb{\ct{p_i}}}
$$
{\sc Modal rules}
$$
\infer[\lf\BOX]{\Dx[1]{\lf{\BOX A}, \BR{\Dx[2]}}}{ 
	\Dx[1]{\lf{\BOX A}, \BR{\lf A, \Dx[2]}}
}
\quad
\infer[\rt\BOX]{\Lx{\rt{\BOX A}}}{
	\Lx{\BR{\rt A}}
}
\quad
\infer[\lf\ilozenge]{\Rx{\lf{\ilozenge A}}}{
	\Rx{\BR{\lf A}}
}
\quad
\infer[\rt\ilozenge]{\Lx[1]{\rt{\ilozenge A}, \BR{\Lx[2]}}}{
	\Lx[1]{\BR{\rt A, \Lx[2]}}
}
$$
$$
\infer[\lf\clozenge]{\Rxb{\lf{\clozenge A}}}{
	\Rxb{\BR{\lf A}}}
\quad
\infer[\ct\clozenge]{\Dxb[1]{\ct{\clozenge A},\BR{\Dxb[2]}}}{
	\Dxb[1]{\ct{\clozenge A},\BR{\ct{A},\Dxb[2]}}
}
$$
{\sc Initial, Decision and Structural Rules}
$$
\infer[\ginit]{\Lx{\lf{A},\rt{A}}}{} 
\quad 
\infer[\gcinit]{\Rxb{\lf{A},\ct{A}}}{}  
\quad 
\infer[\D]{\Rxb{\ct{P}}}{
	\Rx*{\ct{P},\rt{P}}
} 
\quad
\infer[\store]{\Lx{\rt{N}}}{
	\Lx{\ct{N},\rt{\bot}}
} 
\quad
\infer[\W]{\Rx}{\Rxb}
$$
{\sc Cut Rules}
$$
\infer[\icut]{\Rx{\emptyset}}
{\Rx*{\rt{P}}&\Rx{\lf{P}}}\qquad
\infer[\ccut]{\Rx{\emptyset}}
{\RxP{\ct{N}}&\Rx{\lf{N}}}
$$
\caption{Nested ecumenical modal system $\n\EK$. $P$ is a {\em positive} formula, $N$ is a {\em negative} formula. $p$ is atomic. $\Gamma^P$ denotes either $\Gamma^{\rt{\bot}}$ or $\Gamma^*\{\ct{P},\rt{P}\}$ for some $\ct{P}\in\Gamma$.}\label{fig:nEK}\end{figure}

\begin{example}\label{ex:nes} Below right is the proof that $\clozenge A$ is a consequence of $\neg\Box\neg A$ for any formula $A$. Below left the proof that, if $N$ is negative, then $\Box N$ is a consequence of $\neg\ilozenge\neg N$. In fact, this holds for and only for eventually externally classical formulas (see Definition~\ref{def:eec}).
$$
\infer[\rt{\iimp}]{\rt{\neg\Box\neg A\iimp\clozenge A}}
{\infer[\store]{\lf{\neg\Box\neg A},\rt{\clozenge A}}
{\infer[\lf\neg]{\lf{\neg\Box\neg A},\ct{\clozenge A},\rt\bot}
{\infer[\rt\Box]{\rt{\Box\neg A},\ct{\clozenge A}}
{\infer[\ct\clozenge]{\ct{\clozenge A},\BR{\rt{\neg A}}}
{\infer[\rt\neg]{\BR{\ct{A},\rt{\neg A}}}
{\infer[\gcinit]{\BR{\lf{A},\ct{A},\rt\bot}}{}}}}}}}
\qquad\qquad\qquad
\infer[\rt\iimp]{\rt{\neg \ilozenge\neg N\iimp\Box N}}
{\infer[\rt{\Box}]{\lf{\neg \ilozenge\neg N},\rt{\Box N}}
{\infer[\D]{\lf{\neg \ilozenge\neg N},\BR{\rt{N}}}
{\infer[\lf{\neg}]{\lf{\neg \ilozenge\neg N},\BR{\ct{N},\rt{\bot}}}
{\infer[\rt{\clozenge}]{\rt{\ilozenge\neg N},\BR{\ct{N}}}
{\infer[\store]{\BR{\rt{\neg N},\ct{N}}}
{\infer[\ct{\neg}]{\BR{\ct{\neg N},\ct{N}}}
{\infer[\gcinit]{\BR{\lf{N},\ct{N}}}{}}}}}}}}
$$
\end{example}

\subsection{Proof theoretic properties}\label{sec:cut}
As for $\labEK$, the properties of $\nEK$ are inherited by the ones in $\LCE$ (see Lemma~\ref{lemma:wc}). We will list them explicitly since the notation is quite different.
\begin{theorem}\label{thm:pt}
In $\nEK$:
	\begin{enumerate}
		\item The rules $\lf\cvee, \ct\cvee, \lf\cimp,\ct\cimp,\lf\neg,\rt\neg,\lf{p_c},\ct{p_c},\lf\clozenge,\ct\clozenge$ and $\D$ are invertible.
		
		\item The rules $\lf\wedge, \rt\wedge, \lf\ivee, \rt\iimp,\lf\ilozenge, \lf\Box,\rt\Box$ and $\store$ are  totally invertible.		
		\item 
		The following structural
		rules are admissible 
		$$
		\infer[\W_c]{\Lx\otimes\Rx}{\Rx} \qquad 
		\infer[\C_c]{\Lx\otimes\Rx}{\Lx\otimes\Lx\otimes\Rx}
		$$
	\item The rules $\icut$ and $\ccut$ are admissible. The ecumenical weight is the following extension of the measure presented in Section~\ref{sub:cut} for propositional connectives 
\begin{center}
\begin{tabular}{lc@{\qquad}l}
$\ew(\heartsuit A)=\ew(A)+1$ if $\heartsuit\in\{\ilozenge,\Box\}$ & & 
$\ew(\clozenge A)=\ew(A)+4$ 
\end{tabular}
\end{center}

	\end{enumerate}
\end{theorem}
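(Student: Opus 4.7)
The plan is to mirror the proof strategy of Lemma~\ref{lemma:wc} and Theorem~\ref{thm:lcecut} (for $\LCE$), lifting each argument to the nested setting where contexts carry a tree structure and the merge operation $\otimes$ replaces flat context union. Apart from the new modal cases, the main issue to watch is the bookkeeping of where output-like formulas $\rt{A}$ sit in the tree: every rule either preserves the output position or relocates it along a fixed branch, which keeps the inductive measures well-defined.

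For parts (i) and (ii), I would proceed by induction on the height of the derivation of the conclusion of the rule whose invertibility is at stake. For the totally invertible rules, the induction is essentially trivial: the principal cases unfold the matching right/left rule directly, and non-principal cases apply the inductive hypothesis to the premises. The merely invertible rules (the classical ones, the decision rule $\D$, and the classical modal $\lf\clozenge,\ct\clozenge$) need a bit more care, because their applicability depends on a $\rt\bot$ sitting on the appropriate path; here the argument matches the one for $\cvee, \cimp, \cexists$ in the flat case, the only subtlety being that when the principal formula sits inside a bracketed subtree we invoke the inductive hypothesis on that inner nested context.

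For part (iii), admissibility of $\W_c$ is a straightforward induction on derivation height, using that every rule of $\nEK$ tolerates the addition of extra input formulas along the branch. Admissibility of $\C_c$ rests on (i) and (ii): we induct on height, and when the principal formula of the last rule is one of the two copies being contracted we invoke invertibility on the other copy, then apply the inductive hypothesis. The side condition in modal rules like $\rt\Box$ and $\lf\ilozenge$ that introduce fresh nested brackets is handled by contracting bracketed subtrees component-wise via the merge $\otimes$, exactly as in Stra{\ss}burger's treatment for $\IK$.

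The hard part will be (iv). I would use the same double induction on $(n,m)$ as in Theorem~\ref{thm:lcecut}, with $n$ the ecumenical weight and $m$ the cut-height. Non-principal cases permute the cut upward; the principal cases for the propositional and first-order connectives are inherited essentially verbatim. The genuinely new principal cases are the modalities. For $\Box$ (neutral), an $\icut$ with $\rt\Box$ against $\lf\Box$ reduces to an $\icut$ on $A$ inside a freshly introduced bracketed child, much like $\forall$. For $\ilozenge$ (positive, intuitionistic), the $\rt\ilozenge$/$\lf\ilozenge$ principal case reduces analogously to $\iexists$. The main obstacle is the principal case for $\clozenge$ (positive, classical): in analogy with the reduction used for $\cexists$ in Theorem~\ref{thm:lcecut}, I would not cut directly on $A$ (which lives under the diamond at a new world) but instead flip polarities and reduce the $\ccut$ on $\clozenge A$ to a $\ccut$ on $\Box\neg A$ at the same world, which is strictly lighter since $\ew(\clozenge A)=\ew(A)+4$ while $\ew(\Box\neg A)=\ew(A)+2$. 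This requires carefully duplicating the left premise derivation with the polarity inverted (an $\equiv$-transform as in the $\cexists$ case of Theorem~\ref{thm:lcecut}), and verifying that the negation of $A$ can be pushed through the nested bracket in a way compatible with $\rt\Box$ and $\lf\Box$; this is where most of the ``non-routine'' work sits.
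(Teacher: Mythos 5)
Your overall strategy is sound, but it is genuinely different from what the paper does: the paper gives no direct proof of this theorem at all. It justifies the statement by inheritance --- the properties of $\nEK$ are transferred from $\labEK$ (via the sequent translation $\tradm{\cdot}{x}$ of Theorem~\ref{theo:trans}, under which every $\nEK$ rule corresponds step-by-step to a $\labEK$ rule), and the properties of $\labEK$ are in turn inherited from those of $\LCE$ established in Lemma~\ref{lemma:wc} and Theorem~\ref{thm:lcecut} through the meta-logical characterization $\tradme{\cdot}{x}$. Your route is a self-contained induction inside the nested calculus; it buys independence from the translation machinery (and from having to check that invertibility and cut reductions commute with $\tradm{\cdot}{x}$), at the price of redoing the modal principal cases by hand. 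Both are legitimate, and your identification of the measure, of which rules are only invertible in the presence of $\rt{\bot}$, and of the component-wise treatment of brackets under $\otimes$ for contraction all match the structure the paper relies on implicitly.

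Two caveats on part (iv). First, a polarity slip: $\clozenge A$ is \emph{negative} (its outermost connective is classical, cf.\ Definition~\ref{def:pol} as extended in Section~\ref{sec:view}), which is why its principal cut is a $\ccut$; and $\Box\neg A$ is \emph{positive} ($\Box$ is neutral), so a cut on it would be an $\icut$, not a $\ccut$ as you wrote. Second, your detour through $\Box\neg A$ is workable measure-wise ($\ew(\Box\neg A)=\ew(A)+2<\ew(A)+4=\ew(\clozenge A)$) but heavier than needed: the direct analogue of the $\cexists$ reduction in Theorem~\ref{thm:lcecut} is to cut on $\neg A$ \emph{inside the bracket} ($\ew(\neg A)=\ew(A)+1$), pairing a $\ccut$ on $\neg A$ with a smaller-height $\ccut$ on $\clozenge A$ itself and a $\pi^{\equiv}$-style relabelling of the $\ct\clozenge$ steps. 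The genuine difficulty you flag is real either way: $\ct\clozenge$ deposits $\ct{A}$ into an \emph{existing} bracket while $\lf\clozenge$ creates a \emph{fresh} one, so the two occurrences of $A$ live at different nodes of the sequent tree and must be brought together (via admissible weakening and the merge); your sketch names this obstacle but does not discharge it.
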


The invertible but not totally invertible rules in $\nEK$ concern negative formulas, hence they can only be applied in the presence of  empty stoups ($\rt\bot$).  Note also that the rules $\W,\rt\ivee,$ and $\rt\ilozenge$ are not invertible, while $\lf\iimp$ is invertible only w.r.t. the right premise.

\subsection{Soundness and completeness}\label{sec:sound-comp}

In this section we will show that all rules presented in Figure~\ref{fig:nEK} are  sound and complete w.r.t. the ecumenical birelational model. 
The idea is to prove that the rules of the system $\nEK$ preserve {\em validity}, in the sense that if the interpretation of the premises is valid, so is the interpretation of the conclusion. 

The first step is to determine the interpretation of ecumenical nested sequents. 
In this section, we will 
present the translation of nestings to labeled sequents, hence establishing, at the same time, soundness and completeness of $\nEK$ and the relation between this system with $\labEK$. 



\begin{definition}
Let $\lf\Sigma,\ct\Sigma,\rt\Px$ represent that all formulas in the each set/multiset are respectively input left, right, or output formulas. 
The underlying set/multiset will represent in all cases the corresponding multiset of unmarked formulas.
The translation 
$\tradm{\cdot}{x}$ from nested into labeled sequents is defined recursively by
\[
\begin{array}{rcl}
\tradm{\lf{\Sigma_1},\ct{\Sigma_2},\rt{\Px[3]},\BR{\Dx[1]},\ldots, \BR{\Dx[n]}}{x}&\coloneqq&
\left(\{xRx_i\}_i, x:\Sigma_1\seq x:\Sigma_2 ;x:\Px[3]\right)\otimes
\left\{\tradm{\Dx[i]}{x_i}
\right\}_i
\end{array}
\]
where $1\leq i\leq n$, $x_i$ are fresh, $\bot$ is translated to the empty set, and the merge operation on labeled sequents is defined as 
\[\begin{array}{rcl}
(\Gamma_1\seq\Delta_1;\Pi_1)\otimes(\Gamma_2\seq\Delta_2;\Pi_2) &\;\coloneqq\;&
\Gamma_1,\Gamma_2\seq\Delta_1,\Delta_2;\Pi_1,\Pi_2
\end{array}
\]
Since full nested sequents have exactly one output formula (which can be $\bot$), the stoup in the labeled setting
will have at most one formula, and the merge above is well defined.
Given $\mathcal{R}$ a set of relational  formulas, we will denote by $xR^*z$ the fact that there is a path from $x$ to $z$ in $\mathcal{R}$, i.e., there are $y_j\in\mathcal{R}$ for $0\leq j\leq k$ such that $x=y_0, y_{j-1}Ry_j$ and $y_k=z$.
\end{definition}

\begin{theorem}\label{theo:trans}
Let $\Gamma$ be a nested sequent and $x$ be any label. The following are equivalent.
\begin{enumerate}
\item  $\Gamma$ is provable in $\nEK$;
\item $\tradm{\Gamma}{x}$ is provable in $\labEK$.
\end{enumerate}
\end{theorem}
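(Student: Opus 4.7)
The plan is to prove both implications by induction on the height of derivations, exploiting the fact that the translation $\tradm{\cdot}{x}$ turns the bracket-structure of a nested sequent into a set of relational atoms that form a tree rooted at $x$, with each nested bracket contributing one fresh label paired with one relational atom $x_{\text{parent}}Rx_{\text{child}}$.

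For the direction (1) $\Rightarrow$ (2), I would proceed by induction on the $\nEK$ derivation and verify case-by-case that every rule of $\nEK$ can be simulated in $\labEK$ on the translated sequents. The propositional, initial, decision and structural rules are essentially unchanged: they act locally on the formulas at some node of the nested tree and translate to the same rules applied to formulas sharing a common label $x_i$. The modal rules are the interesting cases: an application of $\rt\Box$ in $\nEK$ --- which places $\rt{A}$ inside a new bracket --- corresponds exactly to $\Box R$ in $\labEK$, which introduces a fresh eigenvariable $y$ and the relational atom $xRy$; analogously for $\lf\ilozenge$, $\lf\clozenge$ (fresh $y$) and for $\lf\Box$, $\rt\ilozenge$, $\ct\clozenge$ (reusing an existing $xRy$ produced by the translation of an existing bracket $\BR{\cdot}$). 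The merge operation in the translation ensures the classical and input contexts are threaded correctly, and the distinction between full and input contexts matches the stoup-occupation invariant of $\labEK$.

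For the direction (2) $\Rightarrow$ (1), the key invariant is \emph{tree-shape}: a labeled sequent is in the image of $\tradm{\cdot}{x}$ (up to renaming of labels) exactly when its relational atoms form a tree rooted at $x$, every labeled formula sits at a label reachable from $x$, and the stoup contains at most one output formula at a single label. I would show, by induction on the $\labEK$ derivation of $\tradm{\Gamma}{x}$, that this property is preserved bottom-up: $\Box R$, $\ilozenge L$, $\clozenge L$ extend the tree by a fresh leaf (because of the eigenvariable condition); all other rules either act inside an existing node or travel along an existing relational atom. Because the premises of each $\labEK$ rule on a tree-shaped sequent are again tree-shaped, they are themselves translations $\tradm{\Gamma'}{x}$ of nested sequents $\Gamma'$, and the inductive nested derivation of $\Gamma'$ can be extended by the corresponding $\nEK$ rule applied at the appropriate node of $\Gamma$.

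The main obstacle I anticipate is the administrative bookkeeping in the (2) $\Rightarrow$ (1) direction: a labeled derivation may interleave rule applications at different nodes in an order that does not match any direct nested derivation, and side conditions (polarity of principal formulas, emptiness of the stoup as required by $\store$, $\D$, $\lf\neg$, $\ct\neg$ and the classical left/right rules, eigenvariable freshness) must remain coherent through the translation. Handling this cleanly will likely require a permutation argument showing that any tree-shaped labeled proof can be rearranged so that each rule fires at a node whose subtree is already fully processed above, together with an appeal to the invertibility results of Theorem~\ref{thm:pt} to absorb the auxiliary $\store$/$\D$ steps that the translation of modal rules inserts.
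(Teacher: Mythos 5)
Your proposal is correct and follows essentially the same route as the paper: the translation $\tradm{\cdot}{x}$ is lifted to a one-to-one correspondence between rule applications in $\nEK$ and in $\labEK$ (with the tree-shape of the relational atoms, rooted at $x$, as the invariant preserved by every $\labEK$ rule on a translated sequent), and both implications then follow by induction on the height of derivations. The only difference is that the permutation argument you anticipate for the direction from $\labEK$ to $\nEK$ is not needed: since the rules of $\nEK$ are deep, i.e.\ applicable at any node of the nesting, a tree-shaped labeled proof can be replayed step by step in whatever order its rules fire, with each $\store$ or $\D$ step translating to the corresponding nested rule rather than having to be absorbed by invertibility.
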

\begin{proof} 
Let $xR^*z\in\mathcal{R}$. Observe that:
\begin{itemize}
\item[-] $\tradm{\Rxb{\lf{\clozenge A}}}{x}=\mathcal{R},\Sigma,z:\clozenge A \seq \Delta; x:\bot$ iff\\  $\tradm{\Rxb{\BR{\lf A}}}{x}=\mathcal{R},zRy,\Sigma, y: A \seq \Delta; x:\bot$, with $y$ fresh.
\item[-] $\tradm{\Dxb[1]{\ct{\clozenge A},\BR{\Dxb[2]}}}{x}=\mathcal{R},zRy,\Sigma \seq \Delta,z: \clozenge A; x:\bot$
with $y$ the variable related to the nesting of $\Delta_2$ iff\\
$\tradm{\Dxb[1]{\ct{\clozenge A},\BR{\ct{A},\Dxb[2]}}}{x}=\mathcal{R},zRy,\Sigma \seq \Delta,z: \clozenge A,y: A; x:\bot$.
\item[-] etc.
\end{itemize}
The translation $\tradm{\cdot}{x}$ is then trivially lifted to rule applications. We will illustrate the $\clozenge$ cases.
\begin{itemize}
\item[-] Case $\lf\clozenge$.  
$$
\vcenter{\infer[\lf\clozenge]{\Rxb{\lf{\clozenge A}}}{\Rxb{\BR{\lf A}}}}
\quad\leftrightsquigarrow\quad
\vcenter{\infer[\clozenge L]{\tradm{\Rxb{\lf{\clozenge A}}}{x}}{\tradm{\Rxb{\BR{\lf A}}}{x}}}
$$
\item[-] Case $\rt\clozenge$. 
$$
\vcenter{\infer[\ct\clozenge]{\Dxb[1]{\ct{\clozenge A},\BR{\Dxb[2]}}}{
	\Dxb[1]{\ct{\clozenge A},\BR{\ct{A},\Dxb[2]}}}}
\quad\leftrightsquigarrow\quad
\vcenter{\infer[\clozenge R]{\tradm{\Dxb[1]{\ct{\clozenge A},\BR{\Dxb[2]}}}{x}}{
	\tradm{\Dxb[1]{\ct{\clozenge A},\BR{\ct{A},\Dxb[2]}}}{x}}}
$$
\end{itemize}
Given this transformation, (1) $\Leftrightarrow$ (2) is easily proved by induction on a proof of $\Gamma$/$\tradm{\Gamma}{x}$ in $\nEK$/$\labEK$.
\end{proof}

Theorems~\ref{thm:scLEci} and~\ref{theo:trans} immediately imply the following. 
\begin{corollary}
Nested system $\nEK$ is sound w.r.t. ecumenical birelational semantics.
\end{corollary}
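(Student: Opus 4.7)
The plan is to derive this corollary by chaining the two key results already established: the translation theorem (Theorem~\ref{theo:trans}) that bridges $\nEK$ and $\labEK$, and the soundness direction of Theorem~\ref{thm:scLEci} for the labeled system. Concretely, suppose $\Gamma$ is derivable in $\nEK$; then by Theorem~\ref{theo:trans}, for any label $x$, the labeled sequent $\tradm{\Gamma}{x}$ is derivable in $\labEK$; and by Theorem~\ref{thm:scLEci}, $\tradm{\Gamma}{x}$ is valid in every ecumenical birelational Kripke model. The remaining task is to show that validity of the labeled translation entails validity of the original nested sequent under its intended reading.

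To make this precise, I would recall (or spell out briefly) the standard formula reading of nested sequents: a full nested sequent $\Lx[1],\BR{\Lx[2],\BR{\ldots,\BR{\Lx[n],\rt{A}}}\ldots}$ corresponds to an implication whose antecedent collects the left inputs $\lf{B}$ as $B$, the right inputs $\ct{C}$ as $\neg C$ (wrapped in $\Box$'s according to the bracket depth), and whose succedent is the output formula $A$ similarly wrapped. This is consistent with the labeled translation $\tradm{\cdot}{x}$, since $xRy$ in a labeled sequent is exactly what a bracket represents in the nested setting, and the interpretation of a labeled sequent $\mathcal{R},\Gamma \seq \Delta;\Pi$ in a birelational model at an assignment of worlds to labels coincides, by construction, with the local truth of the formula reading of the nested sequent at the world assigned to $x$.

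Therefore, from validity of $\tradm{\Gamma}{x}$ in every birelational model and every assignment sending $x$ to an arbitrary world $w$, one obtains validity of $\Gamma$ in every model at every world. The forward-looking check is that the merge operation $\otimes$ on labeled sequents used in the translation matches the parent/child structure of nestings (so that the relational atoms $xRx_i$ generated at each bracket faithfully encode the tree), and that $\rt{\bot}$ in the nested world is translated to the empty stoup in the labeled sequent, whose semantic reading is precisely the empty (i.e., falsity) output.

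The main, and really only, obstacle is setting up this formula/semantic reading of nestings carefully enough that the equivalence between ``$\tradm{\Gamma}{x}$ is valid in every ecumenical birelational model'' and ``$\Gamma$ is valid in every ecumenical birelational model'' is transparent; but since the translation is defined precisely so that each bracket corresponds to an $R$-step and each polarity marker to the matching context in $\labEK$, this reduces to unfolding definitions. Once this correspondence is stated, the corollary follows by the composition above, and no further inductive argument on $\nEK$ proofs is needed beyond what Theorem~\ref{theo:trans} already provides.
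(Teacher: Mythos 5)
Your proposal is correct and follows essentially the same route as the paper: the corollary is obtained by composing Theorem~\ref{theo:trans} (the $\nEK$--$\labEK$ translation) with the soundness direction of Theorem~\ref{thm:scLEci} for $\labEK$. The extra care you take in matching the semantic reading of the labeled translation with the formula interpretation of nested sequents is a reasonable elaboration of what the paper leaves implicit (and later addresses via the formula translation $\fm{\cdot}$), but it does not change the argument.
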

We observe that, often, passing
from labeled to nested sequents is not a simple task, sometimes even impossible. 
In fact, although the relational atoms of a sequent appearing in $\labEK$ \emph{proofs} can be arranged so as to correspond to nestings, as shown here, if the relational context is not tree-like~\cite{DBLP:conf/aiml/GoreR12}, the existence of such a translation is not clear. For instance, how should the sequent $xRy, yRx, x:A \seq y:B$ be interpreted in modal systems with symmetrical relations?

Also thanks to their tree shape, it is possible to interpret nested sequents as ecumenical modal formulas, and hence prove soundness in the same way as in~\cite{DBLP:conf/fossacs/Strassburger13}. This direct interpretation of nested sequents as ecumenical formulas means that $\nEK$ is a so-called {\em internal proof system}. 

We  end this section by briefly showing an alternative way of  proving of soundness of $\nEK$  w.r.t. the ecumenical birelational semantics. Please refer to~\cite{DBLP:conf/wollic/MarinPPS21} for a more detailed presentation.
\begin{definition}
The  formula translation $\fm{\cdot}$ for ecumenical nested sequents is given by
\begin{center}
$
	\begin{array}{rclc@{\qquad}rcl}
	\fm{\emptyset} &\coloneqq& \TOP& &
	\fm{\lf{A}, \Lx}& \coloneqq& A \wedge \fm{\Lx}\\
	\fm{\ct{A}, \Lx}& \coloneqq& \neg A \wedge \fm{\Lx} & &
	\fm{\BR{\Lx[1]},\Lx[2]}& \coloneqq& \ilozenge\fm{\Lx[1]}\wedge\fm{\Lx[2]}\\
	\fm{\Lx, \rt{A}} &\coloneqq& \fm{\Lx} \iimp A & &
	\fm{\Lx, \BR{\Rx}} &\coloneqq& \fm{\Lx} \iimp \BOX \fm{\Rx}
	\end{array}
$
\end{center}
where all occurrences of $A \wedge \TOP$ and $\TOP \iimp A$ are simplified to $A$. We say a sequent is {\em valid} if its corresponding formula is valid.
\end{definition}


The next theorem shows that the rules of $\n\EK$ preserve validity in ecumenical modal frames w.r.t. the formula interpretation $\fm{\cdot}$.
\begin{theorem}
Let  
$$
\vcenter{\infer[r]{\Gamma}{\Gamma_1\quad\ldots\quad\Gamma_n}} \quad n\in\{0,1,2\}
$$
be an instance of the rule $r$ in the system $\n\EK$.
Then $\fm{\Gamma_1}\wedge\ldots\wedge\fm{\Gamma_n}\iimp\fm{\Gamma}$ is valid in the birelational ecumenical semantics.
\end{theorem}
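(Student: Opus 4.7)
The plan is to proceed by case analysis on the rule $r$. The argument rests on a \emph{context monotonicity lemma}: for any ecumenical nested context $\Delta\{\cdot\}$ and any sequents $\Sigma_1, \Sigma_2$, if $\fm{\Sigma_1}\iimp\fm{\Sigma_2}$ is valid in every ecumenical birelational frame, then so is $\fm{\Delta\{\Sigma_1\}}\iimp\fm{\Delta\{\Sigma_2\}}$. This follows by a straightforward induction on the depth of the hole in $\Delta\{\cdot\}$, using monotonicity of $A\wedge\cdot$, $\cdot\iimp B$ and $\BOX\cdot$ w.r.t.\ semantic entailment in the birelational model (see Definition~\ref{def:ekripke}). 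With this lemma in hand, it suffices to verify the target implication locally at the innermost hole for each rule and then lift to the enclosing context.

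For the intuitionistic and neutral rules ($\lf\wedge, \rt\wedge, \lf\ivee, \rt\ivee, \lf\iimp, \rt\iimp, \lf\bot, \lf\neg, \rt\neg$), the local checks reduce to the familiar soundness calculations for $\IK$ sequent rules under the birelational semantics, using monotonicity of $V$ along $\leq$. The classical rules ($\lf\cvee, \ct\cvee, \lf\cimp, \ct\cimp, \lf p_c, \ct p_c$) fire only when the enclosing sequent has an empty stoup, and by unfolding $\cvee,\cimp,p_c$ via their defining double-negation equivalences each such rule reduces to a combination of a neutral rule and an instance of the $\neg R$/$\neg L$ pattern already validated. The modal rules $\lf\BOX, \rt\BOX, \lf\ilozenge, \rt\ilozenge$ follow the pattern of Stra{\ss}burger's nested soundness proof for $\IK$~\cite{DBLP:conf/fossacs/Strassburger13}, using the semantic clauses for $\BOX$ and $\ilozenge$ together with the frame conditions F1 and F2 from Definition~\ref{def:ekripke}. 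The rules $\lf\clozenge$ and $\ct\clozenge$ follow from the semantic equivalence $w\modelse\clozenge A$ iff $w\modelse\neg\BOX\neg A$, also built into that definition. The structural rules ($\ginit,\gcinit,\D,\store,\W$) and the cut rules $\icut,\ccut$ are either immediate from the definitions or, in the case of cut, covered by the admissibility result of Theorem~\ref{thm:pt}.

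The main technical obstacle is stating the context monotonicity lemma uniformly across the input/full distinction. The formula translation $\fm{\cdot}$ places the hole either under an even number of implications (a positive position in the resulting formula) or under an odd number (a negative position), and rules such as $\rt\BOX$ and $\rt\ilozenge$ transition between full contexts and input contexts by relocating the output-marked formula across nesting boundaries. Tracking this polarity carefully so that the lemma is applied with the correct direction of implication at each level is the main delicacy; once this is set up, each rule becomes a short semantic check, and the whole argument specialises on the non-classical fragment to the standard soundness proof for nested $\IK$.
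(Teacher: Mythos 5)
The paper does not actually prove this theorem in the text: it is stated without proof, with the reader referred to~\cite{DBLP:conf/wollic/MarinPPS21}, and the soundness of $\nEK$ that the paper really relies on is obtained by a different route, namely the translation into $\labEK$ (Theorem~\ref{theo:trans}) combined with Theorem~\ref{thm:scLEci}. Your sketch instead fills in the ``internal'' argument in the style of~\cite{DBLP:conf/fossacs/Strassburger13}, and the overall plan --- a context lemma plus a local semantic check at the hole for each rule --- is the right one and matches what the companion paper does. The trade-off is the usual one: the labeled route reuses Theorem~\ref{thm:scLEci} wholesale, whereas your route needs no labels but must manage the formula interpretation of contexts by hand.

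Two points in your sketch need repair. First, the context monotonicity lemma as you state it is only correct for holes in positive position; under $\fm{\cdot}$ every input-marked formula lands in the antecedent of an implication, so for the left/input rules the lemma must be applied in the contrapositive direction, and the two-premise rules ($\rt\wedge$, $\lf\ivee$, $\lf\cvee$, $\lf\iimp$, $\lf\cimp$) need the refined form in which the local check is, e.g., $\fm{\Sigma}\iimp\fm{\Sigma_1}\ivee\fm{\Sigma_2}$ together with the distribution of negative contexts over $\ivee$ (and of positive contexts over $\wedge$). You correctly identify this as ``the main delicacy,'' but it is not a side issue to be deferred: it \emph{is} the statement of the lemma, and without the polarity case split the lemma you invoke is false. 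Second, handling $\icut$ and $\ccut$ by appeal to the admissibility result in Theorem~\ref{thm:pt} is a non sequitur for the present statement: admissibility yields that derivability of the premises gives derivability of the conclusion, not that the formula $\fm{\Gamma_1}\wedge\fm{\Gamma_2}\iimp\fm{\Gamma}$ is itself valid. What is needed (and does hold) is a direct semantic check, essentially the validity of modus ponens for $\iimp$ at the cut position for $\icut$, and of $\neg N\wedge N\iimp\bot$ for $\ccut$, again lifted through the context lemma. With these two repairs the argument goes through.
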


\section{Fragments, axioms and extensions}\label{sec:frag}

In this section, we discuss fragments, axioms and extensions of $\nEK$.  

\subsection{Extracting fragments}
 For the sake of simplicity, in this sub-section negation will not be considered a primitive connective, it will rather take its respective intuitionistic or classical form.
\begin{definition}
An ecumenical modal formula $C$ is {\em classical} (resp. {\em intuitionistic}) if it is built from classical (resp. intuitionistic) atomic propositions using only neutral and classical (resp. intuitionistic) connectives but negation, which will be replaced by $A\cimp \bot$ (resp. $A\iimp \bot$). 
\end{definition}
The first thing to observe is that, when only pure fragments are concerned, weakening is admissible (remember that this is not the case for the whole system $\nEK$ -- see Example~\ref{ex:weak}).  Also,  only positive (resp. eventually externally classical) formulas are present in the intuitionistic (resp. classical) fragment.  

Let $\nEK_i$ (resp. $\nEK_c$) be the system obtained from $\nEK-\W$
by restricting the rules to the intuitionistic (resp. classical) case -- see Figures~\ref{fig:nEKi} and~\ref{fig:nEKc}.

The intuitionistic fragment does not have classical input formulas and it coincides with the system $\NIK$  in~\cite{DBLP:conf/fossacs/Strassburger13}. 

%
%

Regarding $\nEK_c$, since all the classical/neutral rules are invertible,
%
the following proof strategy is complete: 
\begin{enumerate}[i.]
\item Apply the rules $\lf\wedge,\rt\wedge, \lf\Box, \rt\Box$ and $\store$ eagerly, obtaining leaves of the form $\Lx{\rt{\bot}}$. 
\item Apply any other rule of $\nEKc$ eagerly, until either finishing the proof with an axiom application or obtaining leaves of the form $\Lx{\rt{P}}$, where $P$ is a positive formula in $\nEKc$, that is, having as main connective $\wedge$ or $\Box$. Start again from step (i).
\end{enumerate}
This discipline corresponds to the focused strategy for a fragment of the two-sided version of the polarized system defined in~\cite{DBLP:conf/rta/ChaudhuriMS16}, exchanging the polarities of diamond and box (which, as observed in the {\em op.cit.}, is a matter of choice since all rules are invertible).

\begin{figure}[t]
	\[
	\infer[\init]{\Lx{\lf{p_i},\rt{p_i}}}{} 
	\quad 
	\infer[\lf\bot]{\Rx{\lf\bot}}{}
	\quad
	\infer[\lf\wedge]{\Rx{\lf{A \wedge B}}}{
		\Rx{\lf A, \lf B}
	}
	\quad
	\infer[\rt\wedge]{\Lx{\rt{A \wedge B}}}{
		\Lx{\rt A}
		&
		\Lx{\rt B}
	}
	\]
	\[
	\infer[\lf\ivee]{\Rx{\lf{A \ivee B}}}{
		\Rx{\lf A}
		&
		\Rx{\lf B}
	}
	\quad
	\infer[\rt\vee_{i_j}]{\Lx{\rt{A_1 \ivee A_2}}}{
		\Lx{\rt A_j}
	}
	\quad
	\infer[\lf\iimp]{\Rx{\lf{A \iimp B}}}{
		\Rx*{\lf{A \iimp B},\rt A}
		&
		\Rx{\lf B}
	}
	\quad
	\infer[\rt\iimp]{\Lx{\rt{A \iimp B}}}{
		\Lx{\lf A, \rt B}
	}
	\]
	\[
	\infer[\lf\BOX]{\Dx[1]{\lf{\BOX A}, \BR{\Dx[2]}}}{ 
		\Dx[1]{\lf{\BOX A}, \BR{\lf A, \Dx[2]}}
	}
	\quad
	\infer[\rt\BOX]{\Lx{\rt{\BOX A}}}{
		\Lx{\BR{\rt A}}
	}
	\quad
	\infer[\lf\ilozenge]{\Rx{\lf{\ilozenge A}}}{
		\Rx{\BR{\lf A}}
	}
	\quad
	\infer[\rt\ilozenge]{\Lx[1]{\rt{\ilozenge A}, \BR{\Lx[2]}}}{
		\Lx[1]{\BR{\rt A, \Lx[2]}}
	}
	\]
	\caption{Intuitionistic fragment $\n\EK_i$.}\label{fig:nEKi}\end{figure}
\begin{figure}[t]
	\[
	\infer[\init]{\Rx{\lf{p_c},\ct{p_c}}}{} 
	\quad 
	\infer[\lf\bot]{\Rx{\lf\bot}}{}
	\quad
	\infer[\lf\wedge]{\Rx{\lf{A \wedge B}}}{
		\Rx{\lf A, \lf B}
	}
	\quad
	\infer[\rt\wedge]{\Lx{\rt{A \wedge B}}}{
		\Lx{\rt A}
		&
		\Lx{\rt B}
	}
	\]
	\[
	\infer[\lf\cvee]{\Rxb{\lf{A \cvee B}}}{
		\Rxb{\lf A}
		&
		\Rxb{\lf B}
	}
	\quad
	\infer[\ct\cvee]{\Rxb{\ct{A \cvee B}}}{
		\Rxb{\ct{A}, \ct{B}}
	}
	\quad
	\infer[\lf\cimp]{\Rxb{\lf{A \cimp B}}}{
		\Rx*{\rt A}
		&
		\Rxb{\lf B}}
	\]
	\[
	\infer[\ct\cimp]{\Rxb{\ct{A \cimp B}}}{
		\Rxb{\lf A, \ct{B}}
	}
	\quad
	\infer[\lf\BOX]{\Dx[1]{\lf{\BOX A}, \BR{\Dx[2]}}}{ 
		\Dx[1]{\lf{\BOX A}, \BR{\lf A, \Dx[2]}}
	}
	\quad
	\infer[\rt\BOX]{\Lx{\rt{\BOX A}}}{
		\Lx{\BR{\rt A}}
	}
	\]
	\[
	\infer[\lf\clozenge]{\Rxb{\lf{\clozenge A}}}{
		\Rxb{\BR{\lf A}}}
	\quad
	\infer[\ct\clozenge]{\Dxb[1]{\ct{\clozenge A},\BR{\Dxb[2]}}}{
		\Dxb[1]{\ct{\clozenge A},\BR{\ct{A},\Dxb[2]}}
	}
	\quad
	\infer[\D]{\Rxb{\ct{P}}}{
		\Rx*{\ct{P},\rt{P}}
	} 
	\quad
	\infer[\store]{\Lx{\rt{N}}}{
		\Lx{\ct{N},\rt{\bot}}
	} 
	\]
	\caption{Classical fragment $\n\EK_c$.}\label{fig:nEKc}\end{figure}

\subsection{About axioms and extensions}\label{sec:ext}

Classical modal logic $\K$ is defined as propositional classical logic, extended with the \emph{necessitation rule} (presented in Hilbert style)
$
A/\square A
$
and the \emph{distributivity axiom}
$
\ka:\;\square(A\ra B)\ra (\square A\ra\square B) 
$. 

There are, however, many variants of axiom $\ka$ that induce logics that are classically, but not intuitionistically, equivalent (see~\cite{plotkin:stirling:86,Sim94}).  In fact, the following 
axioms follow from $\ka$ via the De Morgan laws, 
but are intuitionistically independent
\[
\begin{array}{lc@{\qquad}l}
\ka_1:\;\square(A\ra B)\ra (\lozenge A\ra\lozenge B)  & & \ka_2:\;\lozenge(A\vee B)\ra (\lozenge A\vee\lozenge B)\\
\ka_3:\;(\lozenge A\ra \square B)\ra \square(A\ra B) & &  \ka_4:\;\lozenge \bot\ra\bot
\end{array}
\]
Combining axiom $\ka$ with axioms $\ka_1-\ka_4$ defines intuitionistic modal logic $\IK$~\cite{plotkin:stirling:86}. 

In the ecumenical setting, this discussion is even more interesting, since there are many more variants of $\ka$, depending on the classical or intuitionistic interpretation of  implications and diamonds. 

It is an easy exercise to show that the intuitionistic versions of $\ka_1-\ka_4$ are provable in $\nEK$.
One could then ask: what happens if we exchange the intuitionistic versions of the connectives with classical ones? 

Consider $\ka^{\alpha\beta\gamma}:\square(A\to_\alpha B) \to_\beta (\square A \to_\gamma \square B)$ with $\alpha,\beta,\gamma\in\{i,c\}$. First of all, note that $\ka^{c\beta\gamma}$ is not provable, for any $\beta,\gamma$. This is a consequence of the fact that $A\cimp B,A\not\seq B$ in $\EK$ in general (see Equation~\ref{mp}).
Moreover, since $C\iimp D\seq C\cimp D$ in $\EK$, 
$\ka^{\alpha ii}\seq \ka^{\alpha\beta\gamma}$ for any value of $\beta,\gamma$. The same reasoning can be extended to all the other axioms, for example, $\ka_3^{\alpha\beta\gamma\delta}=(\lozenge_\alpha A\to_\beta \square B)\ra_\gamma \square(A\ra_\delta B)$
is not provable for $\beta=c$ and $\ka_3^{iiii}$ implies all the other possible configurations for $\alpha,\beta,\gamma,\delta$. 

Hence, the intuitionistic version of the $\ka$ family of axioms forms their {\em minimal} version valid in $\EK$. In~\cite{DBLP:conf/wollic/MarinPPS21}, we proved that $\nEK$ was cut-complete w.r.t to $\EK$'s Hilbert system based on this set of axioms.

Regarding modal extensions of $\EK$, we can obtain them by restricting the class of frames we consider or, equivalently, by adding axioms over modalities. 
Many of the restrictions one can be interested in are definable as formulas of first-order logic, where the binary predicate $\rel(x,y)$ refers to the corresponding accessibility relation. 
Table~\ref{tab:axioms-FOconditions} summarizes some of the most common logics, the corresponding frame property, together with the modal axiom capturing it~\cite{Sah75}. 

\begin{table}[t]
\begin{center}
  \begin{tabular}{|c|c|c|}
  \hline
    \textbf{Axiom} & \textbf{Condition} & \textbf{First-Order Formula} \\
  \hline
    $\mathsf t:\,\Box A \rightarrow A \wedge A \rightarrow \lozenge A$ & Reflexivity & $\forall x. \rel(x,x)$ \\
    \hline
    $\mathsf b:\,A \rightarrow \Box \lozenge A \wedge\lozenge \Box A \rightarrow A$ & Symmetry & $\forall x,y. \rel(x,y) \rightarrow \rel(y,x)$ \\
\hline
    $\mathsf 4:\,\Box A \rightarrow \Box \Box A \wedge \lozenge\lozenge A \rightarrow \lozenge A$ & Transitivity & $\forall x,y,z. (\rel(x,y) \wedge \rel(y,z)) \rightarrow \rel(x,z)$ \\
\hline
    $\mathsf 5:\,\Box A \rightarrow \Box \lozenge A \wedge \lozenge\Box A \rightarrow \lozenge A$ & Euclideaness & $\forall x,y,z. (\rel(x,y) \wedge \rel(x,z)) \rightarrow \rel(y,z)$ \\
\hline
  \end{tabular}
  \end{center}
  \caption{Axioms and corresponding first-order conditions on $\rel$.}
    \label{tab:axioms-FOconditions}  
\end{table}

\begin{figure}[t]
	\[
\begin{array}{c@{\quad}c@{\quad}c@{\quad}c}
	\infer[\lf{\mathsf{t}}]{\Rx{\lf{\Box A}}}{
		\Rx{\lf{\Box A},\lf{A}}}
	&
	\infer[\lf{\mathsf{b}}]{\Dx[1]{\BR{\Dx[2],\lf{\Box A}}}}{
		\Dx[1]{\BR{\Dx[2],\lf{\Box A}},\lf{A}}}
	&
	\infer[\lf{\mathsf{4}}]{
		\Dx[1]{\BR{\Dx[2]},\lf{\Box A}}}{\Dx[1]{\BR{\Dx[2],\lf{\Box A}},\lf{\Box A}}}
	&
	\infer[\lf{\mathsf{5}}]{
		\Rx{\BR{\lf{\Box A}}\BR{\emptyset}}}{
		\Rx{\BR{\lf{\Box A}}\BR{\lf{\Box A}}}}
\\[1em]
	\infer[\rt{\mathsf{t}}]{\Lx{\rt{\ilozenge A}}}{\Lx{\rt{A}}}
	&
	\infer[\rt{\mathsf{b}}]{\Lx[1]{\BR{\Lx[2],\rt{\ilozenge A}}}}{
		\Lx[1]{\BR{\Lx[2]},\rt{A}}}
	&
	\infer[\rt{\mathsf{4}}]{
		\Lx[1]{\BR{\Lx[2]},\rt{\ilozenge A}}}{\Lx[1]{\BR{\Lx[2],\rt{\ilozenge A}}}}
	&
	\infer[\rt{\mathsf{5}}]{
		\Lx{\BR{\rt{\ilozenge A}}\BR{\emptyset}}}{
		\Lx{\BR{\emptyset}\BR{\rt{\ilozenge A}}}}
\\[1em]
	\infer[\ct{\mathsf{t}}]{\Rxb{\ct{\clozenge A}}}{\Rxb{\ct{A}}}
	&
	\infer[\ct{\mathsf{b}}]{\Dxb[1]{\BR{\Dxb[2],\ct{\clozenge A}}}}{
		\Dxb[1]{\BR{\Dxb[2]},\ct{A}}}
	&
	\infer[\ct{\mathsf{4}}]{
		\Dxb[1]{\BR{\Dxb[2]},\ct{\clozenge A}}}{\Dxb[1]{\BR{\Dxb[2],\rt{\ilozenge A}}}}
	&
	\infer[\ct{\mathsf{5}}]{
		\Rxb{\BR{\ct{\clozenge A}}\BR{\emptyset}}}{
		\Rxb{\BR{\emptyset}\BR{\ct{\clozenge A}}}}
\end{array}
	\]
	\caption{Ecumenical modal extensions for axioms $\te,\be,\4$ and $\5$.}\label{fig:ext}
\end{figure}

Since the intuitionistic fragment of $\nEK$ coincides with $\NIK$, intuitionistic versions for the rules for the axioms $\te, \be, \4$, and $\5$ match the rules ($\lf{}$) and ($\rt{}$) presented in~\cite{DBLP:conf/fossacs/Strassburger13}, and are depicted in Figure~\ref{fig:ext}.

For completing the ecumenical view, the classical ($\ct{}$) rules for extensions are justified via translations from labeled systems to $\nEK$: We first translate the labeled rules for extensions appearing in~\cite{Sim94} to $\labEK$ then use the translation on derivations defined in Section~\ref{sec:sound-comp} to justify the rule scheme. 

For example, starting with the rule $\T$ below left, which is the labeled rule corresponding to the axiom $\te$ in~\cite{Sim94}, the labeled derivation on the middle  justifies the classical nested rule in the right.
\[
\infer[\T]{\Gamma\vdash z:C}{xRx,\Gamma\vdash z:C}
 \qquad
\infer[\T]{\mathcal{R},\Sigma\seq \Delta ,x: \clozenge A; z:\bot}
{\infer[\clozenge R]{xRx,\mathcal{R},\Sigma\seq \Delta ,x: \clozenge A; z:\bot}
{\deduce{xRx,\mathcal{R},\Sigma\seq \Delta ,x:A,x: \clozenge A; z:\bot}{}}} \qquad 
\infer[\ct{\mathsf{t}}]{\Rxb{\ct{\clozenge A}}}{\Rxb{\ct{A}}}
\]
The rules $\ct\be,\ct\4$ and $\ct\5$, shown in Figure~\ref{fig:ext}, are obtained in the same manner. 

Restricted to the fragments described in the last section, by mixing and matching these rules, we obtain ecumenical modal systems for the logics in the $\Sfi$ modal cube~\cite{blackburn_rijke_venema_2001}  not defined with axiom $\mathsf{d}$.


\section{Related  and future work} \label{sec:conc}

The main idea behind Prawitz' ecumenical system~\cite{DBLP:journals/Prawitz15} is to build a proof framework in which classical and intuitionistic logics may co-exist in peace. Although one could argue that this is easily done using the well known double-negation translations by Kolmogorov, G\"{o}del, Gentzen and others~\cite{DBLP:journals/corr/abs-1101-5442}, Prawitz' view matches the idea  presented by Liang and Miller in their $PIL$ system presented in~\cite{DBLP:journals/apal/LiangM11}:  not seeing classical logic as a fragment of intuitionistic logic but rather to determine parts of {\em reasoning} which are classical or intuitionistic in nature. While double negation acts on {\em formulas}, the approach in~\cite{DBLP:journals/apal/LiangM11} and also followed here concerns {\em proofs}. For example, we {\em do not} want to interpret $A\vee\neg A$ as 
\begin{quote}``it is not the case that $A$ does not hold and it is not the case that it is not the case that $A$ holds''.
\end{quote}
Rather, we aim at identifying the points in proofs where the excluded middle is valid and/or necessary.

The similarities between our work and the system presented in~\cite{DBLP:journals/apal/LiangM11} ends there, though.  Indeed, in the {\em op.cit.} there are two versions of  the constant for absurdum and universal quantifier, and all connectives have a dual version. For example, the intuitionistic implication $\supset$ comes with the intuitionistic dual $\varpropto$, a form of (non-commutative) conjunction, which has no correspondent in usual classical or intuitionistic systems. Also, these dualized versions have opposite {\em polarities} (red and green), that do not match Girard's original idea of polarities:  They are, instead, defined model theoretically. In this work, we opt for smoothly extending well known systems and features (like stoup or polarities), which turns $\LCE$ and $PIL$  incomparable. 
It would be interesting to investigate, for example, if $PIL$ could be smoothly extended to the modal case, as done in this work.

There are other proposals for ecumenical systems in the literature. For, in~\cite{DBLP:conf/fscd/BlanquiDGHT21} the authors present a (type) theory in $\lambda\Pi$-calculus modulo theory, where proofs of several logical systems can be expressed. We are planning to propose type systems related to the systems/fragments described in this paper, and it would be interesting to see the intersection that may appear from the two approaches. It would be also interesting  to implement ecumenical provers, as well as to automate the cut-elimination proof in the L-Framework~\cite{DBLP:journals/corr/abs-2101-03113}.

A complete different approach comes from the school of combining logics~\cite{DBLP:conf/frocos/CerroH96,DBLP:conf/frocos/Lucio00,DBLP:conf/frocos/CaleiroR07}, where Hilbert like systems are built from a combination of axiomatic systems. 
As we trail the exact opposite path, it would be interesting to see if (the propositional fragment of) Prawitz' natural deduction system is axiomatizable.  

Finally, the presence of polarization and stoup paves the way for proposing focused ecumenical systems. For getting a complete focused discipline, though, it would be necessary to add polarized versions of conjunction and disjunction, as done \eg\ in~\cite{DBLP:journals/apal/LiangM11,DBLP:conf/rta/ChaudhuriMS16}. This would give a unified focused framework, which could be used, among other things, to automatically extracting rules from axioms, as done in~\cite{MARIN2022103091}.


\bibliographystyle{abbrv}
\bibliography{references}
\end{document}